\newif\iflong
\longtrue

\documentclass[runningheads]{llncs}
\usepackage[T1]{fontenc}
\usepackage{graphicx}

\usepackage[linesnumbered,ruled,vlined]{algorithm2e}
\usepackage{hyperref}
\usepackage{amsmath}
\usepackage{amssymb}
\usepackage{complexity}
\usepackage{lmodern}
\usepackage{microtype}
\usepackage[all, defaultlines=3]{nowidow}
\usepackage{tcolorbox}
\usepackage{todonotes}
\usepackage{xspace}
\usepackage{cleveref}
\usepackage[]{xcolor}

\spnewtheorem{observation}{Observation}{\bfseries}{\itshape}

\Crefname{observation}{Observation}{Observations}

\newcommand{\lncsqed}{\hfill$\Box$}

\newcommand{\Dmc}{\ensuremath{\mathcal{D}}\xspace}
\newcommand{\Nbb}{\ensuremath{\mathbb{N}}\xspace}

\newcommand{\DE}{\textsc{Decoder Extension}\xspace}

\newcommand{\mpp}{\mathcal{P}}

\begin{document}
\title{Towards Settling the Complexity of the Lettericity Problem}
\author{Mario Grobler\inst{1}\orcidID{0000-0001-8103-6440}\and Nils Morawietz\inst{2,3}\orcidID{0000-0002-7283-4982} \and
Silas Cato Sacher\inst{4}\orcidID{0009-0004-6850-1298}}
\authorrunning{Grobler, Morawietz, and Sacher}
\institute{University of Bremen
\email{grobler@uni-bremen.de}\\
\and LaBRI, CNRS, University of Bordeaux, Bordeaux INP, France. 
\and Institute of Computer Science, Friedrich Schiller University Jena,  Germany
\email{nils.morawietz@uni-jena.de}\\
\and Universit\"at Trier, Fachbereich IV, Informatikwissenschaften, Germany
\email{sacher@uni-trier.de}}
\maketitle              
\begin{abstract}
The \emph{lettericity} of a graph $G=(V,E)$ is defined as the smallest size of an alphabet $\Sigma$ such that there is a word $w_1 \dots w_{|V|} \in \Sigma^*$ and a decoder $\mathcal{D} \subseteq \Sigma^2$ with the property that $G$ is isomorphic to the \emph{letter graph} $G(\mathcal{D}, w)$, that is, the graph with vertex set $\{1, \dots, n\}$ and edge set~$\{ij \mid 1\leq i < j \leq n, w_iw_j \in \mathcal{D}\}$. Note that $G(\mathcal{D}, w)$ can be seen as a graph with inherent coloring $\chi \colon V(G) \rightarrow \Sigma$.
It is unknown whether the lettericity of a given graph can be computed in polynomial time. The problem to determine the lettericity of a given graph is called the \emph{lettericity problem}. As a step towards answering the complexity of this problem, we investigate the following \emph{retrieval problems}: given a graph $G$ together with two of the three solution-objects (word $w$, decoder~$\mathcal{D}$, and coloring~$\chi$), the goal is to compute the third solution-object.
We show that word retrieval and decoder retrieval are solvable in polynomial time, while coloring retrieval is equivalent to the graph isomorphism problem. 
Beyond this, we introduce \emph{symmetric lettericity} which is a restricted version of lettericity where each decoder needs to be symmetrical ($ab\in \mathcal{D}$ if and only if~$ba\in \mathcal{D}$).
As we show, the symmetric lettericity of a graph always equals the neighborhood diversity of the graph, which in fact can be computed in linear time.

\keywords{Lettericity \and Letter Graphs \and 
Neighborhood Diversity}
\end{abstract}

\section{Introduction}

For a given alphabet $\Sigma$, a word $w = w_1 \dots w_n \in \Sigma^*$ and a \emph{decoder}~$\Dmc \subseteq \Sigma^2$, we define the \emph{letter graph} $G(\Dmc, w)$ as the graph with the vertex set $\{1, \dots, n\}$ and the edge set~$\{ij \mid 1\leq i < j \leq n, w_iw_j \in \Dmc\}$. The \emph{lettericity} of a graph $G$ is defined as the smallest size of an alphabet $\Sigma$ such that there is a word $w\in \Sigma^*$ and a decoder $\Dmc \subseteq \Sigma^2$, such that $G$ is isomorphic to $G(\Dmc, w)$. Lettericity was introduced by Petkovšek in 2000~\cite{Pet2002}. For instance, for every $m\geq 2$, the star graph~$S_m$ with~$m$ leaves has lettericity $2$. This can be certified by the word $w := ab^m$ and the decoder $\Dmc := \{ab\}$. Note that, $S_m$ does not have lettericity $1$ because only complete and edge-less graphs have lettericity $1$. The \emph{lettericity problem} is the problem of determining the lettericity of a given graph $G$. 

In their original work, Petkovšek gave bounds or precise values for the lettericity of some graph classes such as threshold graphs, cycles and paths. For paths, the exact value was determined to be within one of two values depending on the length of the path. 
The problem of determining the lettericity of paths precisely remained open for approximately 20 years and was solved by Ferguson in 2020~\cite{Fer2020}.
 This goes to show that determining the exact lettericity even of relatively simple graph classes is far form trivial. 
 Petkovšek already proved in 2000 that for each constant $k$, the recognition problem for $k$-letter graphs, that is, graphs with lettericity at most~$k$, is solvable in polynomial time because the set of $k$-letter graphs is well-quasi-ordered and has a finite set of minimal forbidden subgraphs~\cite{Pet2002}. 
 In 2024, Alecu et al.~\cite{AleKLZ2024} improved upon this by showing that determining if a given graph has lettericity at most $k$ is FPT for the parameter $k$ using MSO-formulas. However the complexity of the lettericity problem remained open and they conjectured it to be NP-complete. 
For more literature on lettericity, see~\cite{AleFKLVZ2022,AleLoz2021,AtmCLZ2015,BerMah2016a,FerVat2022,ManVat2023}. Furthermore, Feng et al. introduced a generalization of lettericity~\cite{FenFMRS2025}, based on a generalized definition of decoders.

A graph parameter related to lettericity is \emph{thinness}, which was introduced by Mannino and Oriolo~\cite{ManOri2002}. They define the \emph{thinness} of a graph $G$ as the smallest number $k$ such that there exists an ordering of the vertices $v_1 < \dots < v_{|G(V)|}$ and a partition of $V(G)$ into $k$ classes $V_1, \dots, V_k$ such that for any triple $(r,s,t)$ with $r < s < t$, if $v_r$ and $v_s$ belong to the same class and $v_t v_r \in E(G)$, then $v_t v_s \in E(G)$.  
The \emph{thinness problem}, that is the problem of determining the thinness of a given graph, was proven to be \NP-complete by Shitov in 2025~\cite{Shi2025}. 
They used a result by Bonomo and de Estrada~\cite{BonEst2019}, who showed that the thinness problem with a given partition is \NP-complete. 
Shitov reduced the problem with a given partition to the general thinness problem, which implies that determining the thinness of a graph is NP-hard.
In 2026, Feng et al.~\cite{FenFFKS2026} observed that any $k$-letter graph is a $k$-thin graph. In their proof they used the inherent coloring $\chi\colon V(G) \to \Sigma$ as a partition. 
This motivates the idea that it might be possible to prove a complexity lower bound or upper bound for the lettericity problem with a similar approach, that is, studying the lettericity problem with a given part of the solution already fixed. 
In this work we call this kind of problem a \emph{retrieval problem}: given a graph $G$ together with two of the three \emph{solution objects}, word $w$, decoder~$\Dmc$, and coloring~$\chi$, the goal is to compute the third solution object. Indeed, we show that the retrieval of the coloring~$\chi$ is \GI-complete. Furthermore, we show that retrieving the word or the decoder can be done in polynomial time. 
Whether there exist reductions between the lettericity problem and these retrieval problems that would clarify the complexity of the lettericity problem remains open.
Finally, we also show that a version of the lettericity problem where the decoder has to be symmetric (that is, where $ab\in \Dmc$ if and only if~$ba\in \Dmc$) can be solved in polynomial time. 
Hence, if the lettericity problem would not be solvable in polynomial time, then a respective hardness proof would need to explicitly construct instances for which no decoder can be symmetrical.

Proofs of statements marked with~$(\star)$ are omitted due to space constraints.
\section{Preliminaries}

We denote by $\Nbb$ the set of positive integers (excluding 0) and define $[n] = \{1, \dots, n\}$ for every $n \in \Nbb$.
For a set~$X$, we denote by~$\binom{X}{2}$ the collection of all size-2 subsets of~$X$.

\newcommand{\emptyword}{\varepsilon}

\subsection{Words and Languages}
Let~$\Sigma$ be an alphabet, that is, a finite non-empty set and denote by~$\Sigma^*$ the set of all finite words over~$\Sigma$. A language is a subset $L \subseteq \Sigma^*$.
For a word $w \in \Sigma^*$, we denote by $|w|$ the length of~$w$ and by $w_i$ the $i$-th symbol in $w$. The word of length~$0$ is called \emph{empty word} and denoted by $\emptyword$.
We say a word $v \in \Sigma^*$ is a \emph{subsequence} of a word $w \in \Sigma^*$, denoted by $v \preccurlyeq w$, if $v$ can be obtained from~$w$ by removing symbols at arbitrary positions of~$w$. Formally, $v = v_1 \dots v_m$ is a subsequence of $w = w_1 \dots w_n$ if there is a function $f \colon [m] \rightarrow [n]$ such that $f(i) < f(j)$ if $i < j$ and $v_i = w_{f(i)}$ for all $i \leq m$. For a word $w = w_1 \dots w_n$, $y$ is a \emph{factor} of $w$ if $y = w_i \dots w_j$ for some $1 \leq i \leq j \leq n$. 
For a word~$w$ and a letter~$c$, we call a maximal non-empty factor of~$w$ that only contains the letter~$c$ a~\emph{$c$-run}.
For a set of letters~$\Sigma'\subseteq \Sigma$ and a word~$w\in \Sigma^*$, we denote by~$w[\Sigma']$ the unique maximal subsequence of~$w$ that only contains letters of~$\Sigma'$.
For sets of constant size, we may omit the set-brackets in this notation, that is, we may for example write~$w[a,b]$ instead of~$w[\{a,b\}]$.

\subsection{Graph Theory}

Throughout this paper mainly we consider undirected, simple, loopless graphs.
A graph~$G=(V,E)$ consists of its vertex set~$V(G):= V$ and edge set $E(G) := E$, where $E$ is a subset of all size-2 subsets of~$V$. 
For vertex sets~$A$ and~$B$, we denote by~$E(A,B)$ the set of all edges of~$E$ that have one endpoint in~$A$ and one endpoint in~$B$.
We define the (open) neighborhood of a vertex $v \in V$ by $N(v) = \{u \in V \mid \{u,v\} \in E\}$. Similarly, we define the closed neighborhood of $v$ by $N[v] = N(v) \cup \{v\}$. We call a pair of distinct vertices $u, v \in V$ \emph{(true) twins} if $N[u] = N[v]$ and \emph{false twins} if $N(u) = N(v)$.
Moreover, we call two vertices~$u$ and~$v$~\emph{generalized twins}, if~$N(u)\setminus \{v\} = N(v)\setminus \{u\}$.
Note that~$u$ and~$v$ are generalized twins if and only if they are true twins or false twins.
Given a subset $U \subseteq V$ of vertices, we denote by $G[U]$ the graph induced by $U$, that is, the graph with vertex set $U$ and edge set $\{\{u,v\} \in E \mid u, v \in U\}$. A \emph{graph isomorphism} between two graphs $G$ and $G'$ is a bijection $f\colon V(G) \to V(G')$ such that for each $u,v \in V(G)$ we have $\{u,v\} \in E(G)$ if and only if $f(u)f(v) \in E(G')$. An instance of the \textsc{Graph Isomorphism} problem consists of two graphs $G_1, G_2$, and the goal is to decide whether there exists a graph isomorphism between $G_1$ and $G_2$. Recall that \textsc{Graph Isomorphism} is $\GI$-complete by definition.

A \emph{permutation} is a sequence of the elements of a finite set $X$ that contains each element from $X$ exactly once. In this sense a permutation of~$X$ can be seen as a word over the alphabet $X$. 
A directed graph~$H=(V,A)$ consists of a set of vertices and a set of arcs~$A\subseteq V\times V$.
A \emph{topological ordering} of a directed graph $H=(V,A)$ is a permutation of $V$ satisfying~$(v,u)\notin A$ if~$u$ precedes~$v$. 
A~\emph{directed acyclic graph (DAG)} is a directed graph without cycles.
A directed graph is a DAG if and only if it has a topological ordering.

\subsection{Letter Graphs}

Let $\Sigma$ be an alphabet of size $k$. We call a language $\Dmc \subseteq \Sigma^2$ a \emph{decoder}. For a word $w = \Sigma^*$ and a decoder $\Dmc$, we define the \emph{$k$-letter graph} $G(\Dmc, w)$ to be the graph with vertex set $V(G(\Dmc, w)) = [|w|]$ and edge set $E(G(\Dmc, w)) = \{ij \mid 1 \leq i < j \leq |w|,  w_iw_j \in \Dmc\}$.
The \emph{lettericity} of a graph $G$ is the smallest value $k$ such that $G$ is isomorphic to a $k$-letter graph. Note that every letter graph inherently comes with a vertex coloring $\chi_w \colon V(G(\Dmc, w)) \rightarrow \Sigma$ defined as $\chi_w(i) = w_i$ for each $i \in [|w|]$. 
For each graph $G=(V,E)$ isomorphic to a letter graph $G(\Dmc, w)$, we say that the graph isomorphism  $f \colon V \to V(G(\Dmc, w))$ \emph{respects the coloring} $\chi\colon V \to \Sigma$ if $\chi(v) = w_{f(v)}$. Note that this is well-defined because $V(G(\Dmc, w)) = [|w|]$.

\section{Retrieval Problems}
In this section we study \emph{retrieval problems} which are of the following form. Given a graph $G=(V,E)$ together with two of the three objects, word $w \in \Sigma^*$, decoder $\Dmc \subseteq \Sigma^2$, and coloring $\chi\colon V \rightarrow \Sigma$, the goal is to compute the third object such that~$G$ is isomorphic to $G(\Dmc, w)$ respecting the coloring. We call an instance of any retrieval problem \emph{consistent} if such a third object exists.

\subsection{Word Retrieval}
\newcommand{\WR}{\textsc{Word Retrieval}\xspace}
The \WR problem is defined as follows. 
Given a graph $G=(V,E)$ with coloring $\chi \colon V \rightarrow \Sigma$ and a decoder $\Dmc \subseteq \Sigma^2$, compute a word $w$ such that $G$ is isomorphic to $G(\Dmc, w)$ respecting the coloring or report that no such word exists.

In the following, let~$I=(G=(V,E),\Sigma,\chi,\Dmc)$ be an instance of~\WR.
We define a directed auxiliary graph~$H=(V,A)$ as follows:
Initially, the arc set~$A$ is empty.
For each~$(u,v)\in V \times V$ with~$u\neq v$, we add the arc~$(u,v)$ to~$A$ if (i)~$\{u,v\}\in E$ and~$\chi(v)\chi(u)\notin \Dmc$ or (ii)~$\{u,v\}\notin E$ and~$\chi(v)\chi(u)\in \Dmc$. 
An example for this construction is given in~\Cref{fig WR}.

\begin{figure}[t!]
\centering
\begin{tikzpicture}[scale = .9]
	\newcommand{\colN}{magenta}
	\newcommand{\colE}{blue}
	\newcommand{\colA}{darkgray}
	\newcommand{\colB}{olive}
	\newcommand{\offset}{4.75}

		\node at (-0.7,2) {$G$:};
	
	\node[shape=circle,draw=black] (l_n_0) at (0,2) {\color{\colN}$n_1$};
	\node[shape=circle,draw=black] (l_e_0) at (3,2) {\color{\colE}$e_1$};
	\node[shape=circle,draw=black] (l_a_0) at (1,1) {\color{\colA}$a_1$};
	\node[shape=circle,draw=black] (l_n_1) at (2,1) {\color{\colN}$n_1$};
	\node[shape=circle,draw=black] (l_b_0) at (0,0) {\color{\colB}$b_1$};
	\node[shape=circle,draw=black] (l_a_1) at (3,0) {\color{\colA}$a_2$};
	
	\draw[thick] (l_n_0) -- (l_e_0);
	\draw[thick] (l_n_0) -- (l_a_0);
	\draw[thick] (l_e_0) -- (l_n_1);
	\draw[thick] (l_a_0) -- (l_n_1);
	\draw[thick] (l_a_0) -- (l_b_0);
	\draw[thick] (l_n_1) -- (l_a_1);
	\draw[thick] (l_b_0) -- (l_a_1);
	
	\node at (1.5,-1) {$\mathcal{D} = \{{\color{\colB}b}{\color{\colA}a},{\color{\colA}a}{\color{\colN}n},{\color{\colN}n}{\color{\colE}e}\}$};
	
	\begin{scope}[xshift = 1cm]
		\node at (\offset-0.7,2) {$H$:};
	
	\node[shape=circle,draw=black] (r_n_0) at (\offset+0,2) {\color{\colN}$n_1$};
	\node[shape=circle,draw=black] (r_e_0) at (\offset+3,2) {\color{\colE}$e_1$};
	\node[shape=circle,draw=black] (r_a_0) at (\offset+1,1) {\color{\colA}$a_1$};
	\node[shape=circle,draw=black] (r_n_1) at (\offset+2,1) {\color{\colN}$n_2$};
	\node[shape=circle,draw=black] (r_b_0) at (\offset+0,0) {\color{\colB}$b_1$};
	\node[shape=circle,draw=black] (r_a_1) at (\offset+3,0) {\color{\colA}$a_2$};

	\draw[thick,color=black,->] (r_n_0) -- (r_e_0);
	\draw[thick,color=black,->] (r_a_0) -- (r_n_0);
	\draw[thick,color=black,->] (r_a_0) -- (r_n_1);
	\draw[thick,color=black,->] (r_n_1) -- (r_e_0);
	\draw[thick,color=black,->] (r_b_0) -- (r_a_0);
	\draw[thick,color=black,->] (r_b_0) -- (r_a_1);
	\draw[thick,color=black,->] (r_a_1) -- (r_n_1);
	\draw[thick,color=black,dashed,->] (r_n_0) .. controls (\offset+2.5,4) and (\offset+5,2) .. (r_a_1);
	\end{scope}
\end{tikzpicture}
\caption{An example for the transformation of the \WR-instance to a directed graph. 
Here, a vertex~$c_i$ implies that this vertex receives color~$c$ under the labeling~$\chi$.
The solid arcs are added due to Condition (i) of the definition of the arc set and the single dashed arc is added due to Condition (ii).
The unique topological ordering~$b_1a_1n_1a_2n_2e_1$ of this example graph is the unique generalized solution for the~\WR-instance and implies that the word~$banane$ is a solution for the~\WR-instance.}
\label{fig WR}
\end{figure}

The arcs of $H$ describe the partial order that the nodes of $G$ must have in relation to each other in all words that are a solution to $I$.

We will show that we are dealing with a yes-instance of~\WR if and only if~$H$ is a DAG. To this end, note that we are dealing with a yes-instance of~\WR if and only if there is a \emph{generalized solution}, that is, a permutation~$\pi(1)\cdot \ldots \cdot \pi(n)$ of the vertices, such that~$\chi(\pi(i))\cdot \ldots \cdot \chi(\pi(n))$ is a solution for our~\WR-instance.

\begin{lemma}
Let~$\pi$ be a permutation of~$V$.
Then, $\pi$ is a topological ordering of~$H$ if and only if~$\pi$ is a generalized solution for~$I$. 
\end{lemma}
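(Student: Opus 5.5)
The plan is to unfold both notions into a single condition on ordered pairs of vertices and compare them pair by pair. First I will make precise what it means for a permutation $\pi=\pi(1)\cdots\pi(n)$ to be a generalized solution. Set $w:=\chi(\pi(1))\cdots\chi(\pi(n))$ and consider the bijection $f\colon V\to[n]$ with $f(\pi(i))=i$. This $f$ respects the coloring by construction, since $w_{f(v)}=\chi(v)$. So $\pi$ is a generalized solution if and only if this $f$ is a graph isomorphism between $G$ and $G(\Dmc,w)$. (I read the intended notion as exactly this; the paper's wording says the word is a solution, i.e.\ that $G\cong G(\Dmc,w)$ respecting $\chi$, and I will note that the canonical $f$ induced by $\pi$ is the relevant isomorphism.)

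Next I will write out when $f$ is an isomorphism. For distinct $u,v\in V$ with $u$ preceding $v$ in $\pi$, the definition of $G(\Dmc,w)$ gives $f(u)f(v)\in E(G(\Dmc,w))$ if and only if $\chi(u)\chi(v)\in\Dmc$. Hence $f$ is an isomorphism if and only if, for every pair in which $u$ precedes $v$, we have $\{u,v\}\in E \iff \chi(u)\chi(v)\in\Dmc$.

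Now I will compare this with the arc set of $H$. By definition, $(v,u)\in A$ holds precisely when exactly one of $\{u,v\}\in E$ and $\chi(u)\chi(v)\in\Dmc$ holds. To see this, apply conditions (i) and (ii) to the ordered pair $(v,u)$: they refer to the word $\chi(u)\chi(v)$. Therefore, for $u$ preceding $v$, the condition $(v,u)\notin A$ is equivalent to $\{u,v\}\in E\iff\chi(u)\chi(v)\in\Dmc$. A topological ordering is by definition a permutation with $(v,u)\notin A$ whenever $u$ precedes $v$. Quantifying over all such pairs, $\pi$ is a topological ordering of $H$ if and only if the edge condition from the previous step holds, that is, if and only if $\pi$ is a generalized solution. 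Both directions follow at once.

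I expect no real obstacle beyond keeping the orientation straight: the arc $(u,v)$ is tested against the word $\chi(v)\chi(u)$, so the arc forbids $u$ from coming after $v$. The one point needing care is the identification of generalized solutions with the canonical isomorphism $f$ induced by $\pi$. If the definition only demands some coloring-respecting isomorphism to $G(\Dmc,w)$, then for the direction from generalized solution to topological ordering I would first argue that any coloring-respecting isomorphism $g$ can be replaced by $f$. Such an isomorphism yields a permutation $\pi'$ (ordering $V$ by $g$) with the same color word, for which the pairwise condition holds, and the lemma would be stated for that permutation. The cleanest route is to fix this reading explicitly before the pairwise comparison.
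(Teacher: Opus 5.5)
Your proof is correct and is essentially the paper's proof. Both reduce the lemma to the pairwise fact that, for $u$ preceding $v$ in $\pi$, $(v,u)\notin A$ holds exactly when $\{u,v\}\in E \iff \chi(u)\chi(v)\in\Dmc$; the paper only splits this into two directions argued by contradiction, and like you it reads ``generalized solution'' via the canonical bijection $\pi(i)\mapsto i$. That reading is actually forced, not merely cleanest: take $\Dmc=\{ab\}$, $\chi(x)=\chi(y)=a$, $\chi(z)=b$ and $E=\{\{x,z\}\}$; then $yzx$ has the solution word $aba$ but violates the arc $(x,z)$, so your side claim that an arbitrary coloring-respecting $g$ can be replaced by $f$ is false and should be dropped.
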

\begin{proof}
$(\Rightarrow)$
We show that for any~$1\leq i < j \leq |V|$, $\{u,v\}\in E$ if and only if $\chi(u)\chi(v)\in \Dmc$, where~$u:=\pi[i]$ and~$v:=\pi[j]$.

Assume~$\{u,v\}\in E$. 
If $\chi(u)\chi(v)\notin \Dmc$, then by definition of the arc set~$A$, $(v,u)\in A$.
As~$\pi$ is a topological ordering of~$H$, and~$u$ precedes~$v$ in~$\pi$, $(v,u)$ is not an arc of~$A$, which implies that~$\chi(u)\chi(v)\in \Dmc$. 

Assume~$\{u,v\}\notin E$. 
If $\chi(u)\chi(v)\in \Dmc$, then by definition of the arc set~$A$, $(v,u)\in A$.
As~$\pi$ is a topological ordering of~$H$, and~$u$ precedes~$v$ in~$\pi$, $(v,u)$ is not an arc of~$A$, which implies that~$\chi(u)\chi(v)\notin \Dmc$. 

Hence, $\{u,v\} \in E$ if and only if $\chi(u)\chi(v) \in \Dmc$ and $\pi$ is a generalized solution for~$I$.

$(\Leftarrow)$
We show that for any~$1\leq i < j \leq |V|$, $(v,u)$ is not an arc of~$A$, for~$u:=\pi[i]$ and~$v:=\pi[j]$.
 
Assume towards a contradiction that~$(v,u)\in A$.
Then, by definition of~$A$, (i)~$\{u,v\}\in E$ and~$\chi(u)\chi(v)\notin \Dmc$ or (ii)~$\{u,v\}\notin E$ and~$\chi(u)\chi(v)\in \Dmc$.

As~$\pi$ is a solution for~$I$, $\{u,v\}\in E$ if and only if~$\chi(u)\chi(v)\in \Dmc$.
This contradicts the assumption that~$(v,u)$ is an arc of~$A$.
Consequently, $\pi$ is a topological ordering of~$H$.
\lncsqed
\end{proof}

As~$H$ can be constructed in polynomial time and deciding whether a topological ordering of a directed graph exists can be done in linear time (and, if one exists, a topological ordering can also be found in that time), we obtain the following.

\begin{theorem}
\WR can be solved in polynomial time.
\end{theorem}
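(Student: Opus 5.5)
The plan is to reduce \WR to deciding whether the auxiliary directed graph~$H$ has a topological ordering, and then to read off a word from such an ordering. Given an instance~$I=(G,\Sigma,\chi,\Dmc)$, we first build~$H=(V,A)$ as defined above. For every ordered pair~$(u,v)$ of distinct vertices, we test membership of~$\{u,v\}$ in~$E$ and of~$\chi(v)\chi(u)$ in~$\Dmc$. This takes~$O(|V|^2)$ pair checks, each done in constant time after preprocessing the adjacency matrix and the decoder as a~$|\Sigma|\times|\Sigma|$ table. We then run a standard topological sorting algorithm, such as Kahn's algorithm, on~$H$. It runs in~$O(|V|+|A|)$ time and either returns a topological ordering~$\pi$ or certifies that~$H$ contains a directed cycle.

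Correctness rests on the preceding lemma together with the remark that~$I$ is a yes-instance if and only if a generalized solution exists. If the algorithm returns~$\pi$, then by the lemma~$\pi$ is a generalized solution, and we output~$w:=\chi(\pi(1))\cdots\chi(\pi(n))$. We then check that~$G\cong G(\Dmc,w)$ via the isomorphism~$f$ with~$f(\pi(i))=i$, which respects the coloring since~$w_{f(v)}=\chi(v)$. This is exactly the edge condition verified in the lemma, because a pair~$i<j$ is adjacent in~$G(\Dmc,w)$ if and only if~$\chi(\pi(i))\chi(\pi(j))\in\Dmc$.

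Conversely, suppose some word~$w$ is a solution, witnessed by an isomorphism~$f$ that respects the coloring. Setting~$\pi(i):=f^{-1}(i)$ gives a permutation with~$\chi(\pi(i))=w_i$, so~$\pi$ is a generalized solution. By the lemma,~$\pi$ is a topological ordering of~$H$, so~$H$ is a DAG. Hence whenever the algorithm reports a cycle, no solution exists and we may correctly report this.

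There is no real obstacle, since the combinatorial content lies entirely in the lemma. The one point needing care is the equivalence between solutions (words) and generalized solutions (permutations). In particular, one must make sure the word obtained from~$\pi$ really comes with a coloring-respecting isomorphism, and that every solution word induces such a permutation. Both directions follow directly from the definitions, as sketched above. The total running time is~$O(|V|^2+|\Sigma|^2)$, which is polynomial.
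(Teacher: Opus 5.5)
Your proposal is correct and follows the paper's route exactly: build $H$, invoke the lemma that topological orderings of $H$ are precisely the generalized solutions, and find such an ordering (or detect a cycle) by linear-time topological sorting. You also set $\pi(i):=f^{-1}(i)$ from the witnessing isomorphism $f$. This makes precise the equivalence between solution words and generalized solutions, which the paper only asserts.
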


\subsection{Decoder Retrieval}

\newcommand{\DR}{\textsc{Decoder Retrieval}\xspace}

An instance~$I$ of \DR consists of a graph~$G= (V,E)$, an alphabet~$\Sigma$, a coloring~$\chi\colon V \to \Sigma$, and a word~$w\in \Sigma^{|V|}$.
For each letter~$a\in \Sigma$, we denote by~$V_a$ the vertices of color~$a$, that is, $V_a := \chi^{-1}(a)$.
We further assume that for each letter~$a\in \Sigma$, $V_a \neq \emptyset$ and that~$w$ contains the letter~$a$ exactly $|V_a|$~times.
The goal is to decide whether there is some~$\Dmc\subseteq \Sigma^2$, such that~$G$ can be obtained from~$\chi$ and $w$, by using~$\Dmc$ as the decoder.
We call each such~$\Dmc\subseteq \Sigma^2$ a~\emph{solution} (for~$I$).
If the vertex set of the graph is clear, we may simply define the edge set of the graph in an instance of~\DR.
That is, we may say that~$(E,\Sigma,\chi,w)$ is an instance of~\DR, where~$E$ is a set of edges.

We show that~\DR can be solved in polynomial time. To present the algorithm, we proceed in four steps. In the first step, we show that we can decide whether a given decoder~$\Dmc$ is in fact a solution for our given~\DR-instance.
Afterwards, we show that we can make the choice on whether~$ab$ or~$ba$ is in a solution greedily for many letter pairs~$\{a,b\}$.
The only cases where we cannot do this are pairs~$\{a,b\}$ where~$w[a,b]$ is a palindrome and edges between~$V_a$ and~$V_b$ are non-trivial, that is, some possible edges between these vertex sets exists and some do not exist.
For these remaining letter pairs~$\{a,b\}$ and~$\{b,c\}$ where the choice cannot be made trivially, a choice for the decoder word of~$\{ab,ba\}$ in a solution immediately forces a choice for the decoder word of~$\{cb,bc\}$ in the same solution.
Based on this, we finally reduce the decision on whether there is a solution for the~\DR instance to the satisfiability of a computable 2-SAT formula (after some sanity checks).

\begin{lemma}\label{lem verify decoder}
Let~$I=(G,\Sigma,\chi,w)$ be an instance of~\DR.
Moreover, let~$\Dmc\subseteq \Sigma^2$.
Then, in polynomial time, we can decide whether~$D$ is a solution for~$I$.
\end{lemma}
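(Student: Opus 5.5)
Given $\Dmc$, the question is whether $G$ is isomorphic to $G(\Dmc,w)$ via an isomorphism respecting $\chi$. Since $w$ and $\Dmc$ are both fixed, so is the graph $G':=G(\Dmc,w)$ with its coloring $\chi_w$. It would be natural to reduce this to the earlier result on \WR: $(G,\Sigma,\chi,\Dmc)$ is an instance of \WR, and by the previous lemma its generalized solutions are exactly the topological orderings of the auxiliary digraph $H$. However, \WR only asks whether \emph{some} word works, whereas here the word is prescribed. So I will instead characterize directly which orderings realize $w$.

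A bijection $f\colon V\to[|V|]$ respects the coloring iff $w_{f(v)}=\chi(v)$ for all $v$. It is an isomorphism iff for every $u,v$ with $f(u)<f(v)$ we have $\{u,v\}\in E \Leftrightarrow \chi(u)\chi(v)\in\Dmc$. Put differently, $f^{-1}$ read as a permutation $\pi$ of $V$ must satisfy two conditions. First, $\chi(\pi(1))\cdots\chi(\pi(n))=w$. Second, $\pi$ must be a topological ordering of $H$, which is exactly the condition of the previous lemma. So I must decide whether $H$ has a topological ordering whose color sequence is exactly $w$. This looks like a constrained scheduling problem, and I would attack it greedily.

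\textbf{Step 1: sanity checks.} First verify that $H$ is acyclic.

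\textbf{Step 2: greedy construction.} Build $\pi$ position by position. At step $i$ we need an unplaced vertex $v$ of color $w_i$ all of whose in-neighbours in $H$ are already placed. The key is an exchange argument showing that the choice among available vertices of color $w_i$ can be made canonically.

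\textbf{Main obstacle: the exchange argument.} Consider two vertices $u,u'$ of the same color $a$. For any third vertex $x$ of color $c$, the arc between $x$ and $u$ is determined by whether $\{x,u\}\in E$, given $\chi(x)$ and $\chi(u)$. Hence $H$ restricted to the arcs incident to $V_a$ encodes, for each $u\in V_a$, its neighborhood pattern. Same-colored vertices have no arcs between them exactly when the pair is consistent. My plan is to define a preorder on $V_a$: $u\le u'$ if placing $u$ first is never worse. Concretely, I would compare the out-neighbourhoods and in-neighbourhoods in $H$ per color class. I would then argue that a solution exists only if these neighborhoods are nested, i.e. the vertices of $V_a$ form a chain under inclusion of constraint sets. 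This is plausible because in $G(\Dmc,w)$ two vertices of the same letter at positions $i<j$ differ only in their adjacency to positions strictly between them, in a monotone way. If the chain condition fails, output ``no''. If it holds, always pick the minimal available vertex in the chain.

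\textbf{Fallback.} If the greedy exchange proof breaks down, I would use a direct route. Compute $G'$, and for each color class use the fact that in a letter graph the same-letter vertices are linearly ordered by position, with neighborhoods changing monotonically. Then match $V_a$ to the positions of $a$ in $w$ by sorting vertices according to, for each other color $b$, the number of $b$-neighbours. Since neighborhoods in $G(\Dmc,w)$ restricted to $V_b$ are determined by position counts, this sorting yields a candidate $f$. Checking whether that candidate is an isomorphism takes $O(|V|^2)$ time.
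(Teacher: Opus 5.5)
Your reformulation via $H$ is correct, and your Step~2 is exactly the paper's algorithm. A vertex $v$ of colour $a=w_i$ has all its $H$-in-neighbours placed if and only if $\{v,x\}\in E \Leftrightarrow a\chi(x)\in\mathcal{D}$ for every unplaced $x\neq v$. That is precisely the test the paper performs before recursing on $G-v$.

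\textbf{The gap.} You never justify that the choice among several available vertices of colour $w_i$ is harmless. You call this the main obstacle and replace it with a speculative chain condition (``pick the minimal available vertex''). You do not prove that this condition is necessary, nor that the minimal choice can always be completed. So, as written, there is no correctness argument for the algorithm.

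\textbf{The missing idea.} It is short and makes the chain detour unnecessary. Suppose $v,v'$ of colour $a$ are both available, and let $R$ be the set of unplaced vertices. Then $N(v)\cap R=\{x\in R\setminus\{v\} : a\chi(x)\in\mathcal{D}\}$, and likewise for $v'$. Hence $(N(v)\cap R)\setminus\{v'\}=(N(v')\cap R)\setminus\{v\}$, i.e.\ $v$ and $v'$ are generalized twins in $G[R]$. Transposing them is a colour-preserving automorphism of $G[R]$. So any valid completion that puts $v'$ first in the remaining suffix of $w$ yields one that puts $v$ first. Adjacencies to already placed vertices were verified when those vertices were placed. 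Therefore the greedy never fails on a yes-instance.

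\textbf{The fallback.} It has the same hole. Sorting $V_a$ by numbers of $b$-neighbours gives a correct candidate on yes-instances only under two conditions:
\begin{itemize}
\item the sort direction for each $b$ is taken from $\mathcal{D}$ (counts decrease along $w$ if only $ab\in\mathcal{D}$, and increase if only $ba\in\mathcal{D}$), which you do not specify;
\item ties are harmless.
\end{itemize}
The final $O(|V|^2)$ check gives soundness. Completeness, however, requires showing that same-coloured vertices with equal count vectors are generalized twins. That is again the twin argument you do not supply.
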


\begin{algorithm}[t!]
\DontPrintSemicolon
\KwIn{Graph $G=(V,E)$, an alphabet~$\Sigma$, a coloring $\chi \colon V \rightarrow \Sigma$, word $w = w_1 \dots w_{|V|} \in \Sigma^*$, and~$\Dmc \subseteq \Sigma^2$}
\KwOut{YES if and only if~$\Dmc$ is a solution for the instance~$(G,\Sigma, \chi, w)$ of \DR.}

\If{$|w| = 0$}{\Return YES}

{$a\cdot w' \gets w$}

\For{$b \in \Sigma$}{
	$N_b \gets \begin{cases}V_b& ab\in \Dmc\\\emptyset&ab\notin \Dmc \end{cases}$ 
}

\If{$\exists v\in V_a: N(v) = (\bigcup_{b\in \Sigma} N_b) \setminus \{v\}$\label{if condition}}
{\Return VerifyDecoder($G-v,\Sigma, \chi|_{V\setminus \{v\}},w',\Dmc$)}
	\Else{\Return NO}

\caption{VerifyDecoder($G,\Sigma, \chi,w,\Dmc$)}
\label{algo verify}
\end{algorithm}
\begin{proof}
We show that \Cref{algo verify} solves this task.
Clearly, the running time of the algorithm is polynomial.
Intuitively, the algorithm makes a choice for the vertex which will be isomorphic to the first vertex of the letter graph~$G(\Dmc,w)$.
This is then done recursively.
Since without loss of generality the first letter of~$w$ is~$a$, the first vertex~$v$ needs to be from~$V_a$ and its neighborhood~$N_b$ towards any vertex of~$V_b$ with~$b\in \Sigma$ has to be either~$V_b \setminus \{v\}$ or~$\emptyset$.
Here, we need to remove~$v$ from~$V_b$, as for~$b=a$, $v$ never has an edge towards itself, and for~$b\neq a$, $V_b \setminus \{v\} = V_b$.
If no such vertex~$v$ exists, the algorithm correctly outputs NO.
That is, if~$\Dmc$ is not a solution for the \DR-instance, the algorithm is correct.
It thus remains to show that the algorithm is also correct if~$\Dmc$ is a solution.
By the above, \Cref{if condition} has to hold for some vertex~$w\in V_a$.
Now assume towards a contradiction that the algorithm chose the wrong vertex~$v\in V_a$ at this line.
As both~$v$ and~$w$ fulfill the condition of \Cref{if condition}, we derive that~$v$ and~$w$ are generalized twins.
That is, $v$ and~$w$ are completely indistinguishable.
Hence, $v$ cannot be a wrong choice, if~$w$ would be a right choice.
Consequently, the algorithm correctly answers YES if~$\Dmc$ is a solution.
\lncsqed
\end{proof}

In the remainder of the section, we assume that we are given an instance~$I$ of~\DR and that in this instance, each color of~$\Sigma$ is assigned to at least one vertex.
We now continue with our sanity checks and greedy choice for some letter pairs.
The first observation considers the choice on~$\Dmc \cap \{aa\}$ for each letter~$a$.

\begin{observation}\label{color internal}
Let~$a\in \Sigma$.
If~$V_a$ is a clique in~$G$, then for every solution~$\Dmc$, $\Dmc \cup \{aa\}$ is a solution.
If~$V_a$ is an independent set in~$G$, then for every solution~$\Dmc$, $\Dmc \setminus \{aa\}$ is a solution.
If~$V_a$ is neither a clique nor an independent set, then we are dealing with a no-instance.  
\end{observation}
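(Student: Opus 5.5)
The plan is to observe that, in any letter graph $G(\Dmc,w)$, the adjacency between two positions $i<j$ with $w_i=w_j=a$ depends only on whether $aa\in\Dmc$. It does not depend on any other element of $\Dmc$. Conversely, membership of $aa$ in $\Dmc$ affects only pairs of positions that both carry letter $a$. From these two facts all three claims follow.

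Fix a solution $\Dmc$ and an isomorphism $f\colon V\to[|w|]$ that respects $\chi$. For $u,v\in V_a$ with $u\neq v$, the positions $f(u)$ and $f(v)$ both carry letter $a$. Hence $\{u,v\}\in E$ if and only if $aa\in\Dmc$.

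\emph{Third claim.} If $aa\in\Dmc$, then $V_a$ is a clique. If $aa\notin\Dmc$, then $V_a$ is an independent set. So if $V_a$ is neither, no solution exists, and we have a no-instance.

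\emph{First and second claims.} Let $\Dmc'$ be $\Dmc\cup\{aa\}$ or $\Dmc\setminus\{aa\}$, and take the same map $f$. For a pair $u,v$ with $\chi(u)\chi(v)$ not both equal to $a$, the letter pair at positions $f(u),f(v)$ (in position order) is not $aa$. Its membership in $\Dmc'$ is therefore the same as in $\Dmc$, so adjacency is preserved. For a pair $u,v\in V_a$, adjacency in $G(\Dmc',w)$ holds exactly when $aa\in\Dmc'$. This matches $G$: for $\Dmc\cup\{aa\}$ we need $V_a$ to be a clique, and for $\Dmc\setminus\{aa\}$ we need $V_a$ to be independent, which are exactly the respective hypotheses. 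Thus $f$ is still a coloring-respecting isomorphism onto $G(\Dmc',w)$, so $\Dmc'$ is a solution.

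There is no real obstacle here. The only point needing care is the degenerate case $|V_a|=1$, where $V_a$ is both a clique and an independent set. Both statements then hold trivially, since no pair of positions both carries $a$.
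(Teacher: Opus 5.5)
Your proof is correct. The paper states this as an observation without proof, and your argument supplies exactly its implicit justification: adjacency inside $V_a$ is governed solely by whether $aa\in\mathcal{D}$, and toggling $aa$ affects no other pair, so the same coloring-respecting isomorphism $f$ still works for $\mathcal{D}\cup\{aa\}$ or $\mathcal{D}\setminus\{aa\}$.
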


The next observation considers the choice on~$\Dmc \cap \{ab,ba\}$ for each letter pairs~$\{a,b\}$ where either all or no edges exists between~$V_a$ and~$V_b$.

\begin{observation}\label{identical between colors}
Let~$\{a,b\}\subseteq \Sigma$ with~$a\neq b$.
If~$|E(V_a,V_b)| = |V_a|\cdot |V_b|$, then for every solution~$\Dmc$, $\Dmc \cup \{ab,ba\}$ is a solution.
If~$|E(V_a,V_b)| = 0$, then for every solution~$\Dmc$, $\Dmc \setminus \{ab,ba\}$ is a solution.  
\end{observation}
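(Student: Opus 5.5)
The plan is to fix an isomorphism witnessing that $\Dmc$ is a solution and observe that it also witnesses $\Dmc' := \Dmc \cup \{ab,ba\}$ (respectively $\Dmc' := \Dmc \setminus \{ab,ba\}$) being a solution. Concretely, let $f\colon V \to [|w|]$ be an isomorphism from $G$ to $G(\Dmc,w)$ respecting $\chi$. We then show that the same $f$ is an isomorphism from $G$ to $G(\Dmc',w)$; it respects $\chi$ automatically, since word and coloring are unchanged.

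The key step is to compare the edge sets of $G(\Dmc,w)$ and $G(\Dmc',w)$. For positions $i<j$, membership of $ij$ depends only on whether $w_iw_j$ lies in the decoder. Since $\Dmc$ and $\Dmc'$ differ only on words in $\{ab,ba\}$ and $a\neq b$, the two letter graphs can differ only on pairs $\{i,j\}$ with $\{w_i,w_j\}=\{a,b\}$. Because $f$ respects $\chi$, these are exactly the images under $f$ of pairs $\{u,v\}$ with $u\in V_a$ and $v\in V_b$. So we only need to check such pairs.

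Then we split into the two cases.
\begin{itemize}
\item If $|E(V_a,V_b)|=|V_a|\cdot|V_b|$, every pair $u\in V_a$, $v\in V_b$ is an edge of $G$. In $G(\Dmc',w)$ the pair $f(u)f(v)$ is also an edge, because whichever of $ab$ or $ba$ is the relevant ordered word (depending on the order of $f(u)$ and $f(v)$), it lies in $\Dmc'$.
\item If $|E(V_a,V_b)|=0$, no such pair is an edge of $G$. In $G(\Dmc',w)$ no such pair is an edge either, since neither $ab$ nor $ba$ lies in $\Dmc'$.
\end{itemize}
Together with the unchanged pairs, this shows that $f$ is an isomorphism to $G(\Dmc',w)$. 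There is no real obstacle here; the only point needing care is that $a\neq b$ ensures the modification does not touch pairs inside a single color class.
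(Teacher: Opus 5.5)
Your proof is correct and is essentially the argument the paper intends; the paper states this as an observation without giving a proof. You keep the same $\chi$-respecting isomorphism $f$, and because $a\neq b$, changing the decoder only on $\{ab,ba\}$ affects exactly the pairs between $V_a$ and $V_b$. Those pairs are uniformly all edges or all non-edges in $G$, so $f$ remains an isomorphism for the modified decoder.
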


We now give a name to letter pairs for which none of the above greedy choices applies.

\begin{definition}
We call an unordered letter pair~$\{a,b\}$ with~$a\neq b$~\emph{one-sided}, if~$0 < |E(V_a,V_b)| < |V_a|\cdot |V_b|$. 
\end{definition}
 
In the following, let~$\mpp$ denote the set of all one-sided letter pairs.

\begin{lemma}

\label{one sided}
Let~$\{a,b\}$ be a one-sided letter pair.
Then for each solution~$\Dmc$, $|\Dmc\cap \{ab,ba\}| = 1$.
\end{lemma}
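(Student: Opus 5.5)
We argue by contradiction on the two complementary cases $|\Dmc\cap\{ab,ba\}|=0$ and $|\Dmc\cap\{ab,ba\}|=2$. Fix a solution $\Dmc$ and a graph isomorphism $f\colon V\to V(G(\Dmc,w))$ respecting $\chi$. Since $a\neq b$, for any $u\in V_a$ and $v\in V_b$ the images $f(u)$ and $f(v)$ are distinct positions $i\neq j$ of $w$ with $w_{f(u)}=a$ and $w_{f(v)}=b$. By the definition of the letter graph, $\{u,v\}\in E$ if and only if $w_{\min(i,j)}w_{\max(i,j)}\in\Dmc$. In other words, the edge is present exactly when $ab\in\Dmc$ (in case $f(u)<f(v)$) or $ba\in\Dmc$ (in case $f(v)<f(u)$).

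First, suppose $\Dmc\cap\{ab,ba\}=\emptyset$. Then for every $u\in V_a$ and $v\in V_b$, whichever order $f(u)$ and $f(v)$ have in $w$, the relevant two-letter word is not in $\Dmc$. Hence $\{u,v\}\notin E$, so $|E(V_a,V_b)|=0$. This contradicts that $\{a,b\}$ is one-sided.

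Second, suppose $\{ab,ba\}\subseteq\Dmc$. Then for every such pair $u,v$ the relevant word lies in $\Dmc$ regardless of order. Hence every pair is an edge, and $|E(V_a,V_b)|=|V_a|\cdot|V_b|$, again contradicting one-sidedness. Since these two cases are exactly the complements of $|\Dmc\cap\{ab,ba\}|=1$, the lemma follows.

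The only point needing care is that $V_a$ and $V_b$ are disjoint because $a\neq b$. This guarantees that all $|V_a|\cdot|V_b|$ pairs are genuine vertex pairs counted in $E(V_a,V_b)$, and that their images under $f$ are distinct positions. Beyond that, the argument is a direct unfolding of the definition, and I expect no real obstacle.
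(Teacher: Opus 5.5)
Your proof is correct and follows the same route as the paper: both rule out $|\Dmc\cap\{ab,ba\}|=2$, which would force all $|V_a|\cdot|V_b|$ edges, and $|\Dmc\cap\{ab,ba\}|=0$, which would force no edges, each contradicting one-sidedness. The only difference is that you unfold the definition through an explicit coloring-respecting isomorphism $f$, which the paper leaves implicit.
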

\newcommand{\proofOneSided}{

\begin{proof}

Let~$\Dmc$ be a solution for~$I$.
Then, $|\Dmc \cap \{ab,ba\}| < 2$, as otherwise, each vertex of~$V_a$ has to be adjacent to each vertex of~$V_b$, which would contradict the fact that~$\{a,b\}$ is a one-sided letter pair.
Similarly, $|\Dmc \cap \{ab,ba\}| > 0$, as otherwise, each vertex of~$V_a$ has no neighbor in~$V_b$, which would contradict the fact that~$\{a,b\}$ is a one-sided letter pair.
Thus, $|\Dmc\cap \{ab,ba\}| = 1$.
\lncsqed
\end{proof}
}

\proofOneSided

Next, we show that one-sided letter pairs~$\{a,b\}$ for which~$w[a,b]$ contains only a single~$a$-run and a single~$b$-run guarantee that we are dealing with a no-instance.
Note that~$w[a,b]$ contains a single~$a$-run and a single~$b$-run if and only if~$w[a,b]\in a^+b^+\cup b^+a^+$.

\begin{lemma}
\label{sanity}
Let~$\{a,b\}$ be a one-sided letter pair, such that~$w[a,b]$ contains only one~$a$-run and only one~$b$-run, then~$I$ is a no-instance of~\DR.
\end{lemma}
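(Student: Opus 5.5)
We argue by contradiction: suppose $\Dmc$ is a solution for~$I$. Since $\{a,b\}$ is one-sided, \Cref{one sided} gives $|\Dmc\cap\{ab,ba\}| = 1$. By symmetry between $a$ and $b$, we may assume $w[a,b]\in a^+b^+$, so every occurrence of $a$ in $w$ precedes every occurrence of $b$. (If instead $w[a,b]\in b^+a^+$, we swap the roles of the letters.)

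Now fix the isomorphism $f$ between $G$ and $G(\Dmc,w)$ that respects $\chi$. For any $u\in V_a$ and $v\in V_b$ we have $f(u)<f(v)$, because $w_{f(u)}=a$ and $w_{f(v)}=b$. By the definition of the letter graph, $\{u,v\}\in E$ if and only if $w_{f(u)}w_{f(v)} = ab\in\Dmc$. This condition does not depend on the particular choice of $u$ and $v$. We split into two cases.

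\emph{Case $ab\in\Dmc$.} Every pair in $V_a\times V_b$ is then adjacent, so $|E(V_a,V_b)| = |V_a|\cdot|V_b|$.

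\emph{Case $ab\notin\Dmc$.} Then $ba\in\Dmc$, but this entry is never used, since no $b$ occurs before an $a$. Hence no pair in $V_a\times V_b$ is adjacent, and $|E(V_a,V_b)|=0$.

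Either case contradicts the assumption $0<|E(V_a,V_b)|<|V_a|\cdot|V_b|$. Therefore no solution exists, and $I$ is a no-instance.

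The only point that needs care is translating the positions in $w$ through the color-respecting isomorphism so that $V_a$ and $V_b$ correspond exactly to the $a$-positions and $b$-positions of $w$. This holds because $w$ contains exactly $|V_c|$ copies of each letter $c$.
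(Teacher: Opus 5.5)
Your proposal is correct and follows essentially the same argument as the paper. In both, when $w[a,b]\in a^+b^+\cup b^+a^+$ the relative order of every $a$-vertex and $b$-vertex is fixed, so the decoder yields either all or no edges between $V_a$ and $V_b$, which contradicts one-sidedness. Your appeal to \Cref{one sided} and the discussion of the unused entry $ba$ are harmless but not needed.
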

\newcommand{\proofSanity}{

\begin{proof}

If~$w[a,b]$ contains only one~$a$-run and only one~$b$-run, then $w[a,b]$ is either~$a^xb^y$ or~$b^ya^x$ for values~$x,y\geq 1$.
Hence, for each decoder candidate~$\Dmc \subseteq \Sigma^2$, either all or none of the possible edges between vertices of~$V_a$ and~$V_b$ are implied.
As~$\{a,b\}$ is a one-sided letter pair, $0<|E(V_a,V_b)|<|V_a| \cdot |V_b|$, which implies that there there cannot be a solution for~$I$.
\lncsqed
\end{proof}
}

\proofSanity

In the following we thus gain insights for one-sided letter pairs~$\{a,b\}$ where~$w[a,b]$ contains at least two~$a$-runs or at least two~$b$-runs.
Additionally, we observe the following if~$w[a,b]$ is not a palindrome.

\newcommand{\figPalin}{
\begin{figure}[t!]
\newcommand{\colN}{magenta}
\newcommand{\colE}{blue}
\newcommand{\colA}{darkgray}
\newcommand{\colB}{olive}
\centering
\begin{tikzpicture}[scale=.9]

\begin{scope}[xshift=-3.5cm]
\node at (1,-2) {$w = {\color{magenta}a}{\color{blue}bb}{\color{magenta}a}{\color{blue}b}{\color{magenta}a}$};
\node at (-.8,1.5) {$G=(V,E):$};

\node[shape=circle,draw=black,inner sep = .5pt] (a1) at (0,1) {\color{magenta}$a_1$};
\node[shape=circle,draw=black,inner sep = .5pt] (a2) at (0,0) {\color{magenta}$a_2$};
\node[shape=circle,draw=black,inner sep = .5pt] (a3) at (0,-1) {\color{magenta}$a_3$};
\node[shape=circle,draw=black,inner sep = .5pt] (b1) at (2,1) {\color{blue}$b_1$};
\node[shape=circle,draw=black,inner sep = .5pt] (b2) at (2,0) {\color{blue}$b_2$};
\node[shape=circle,draw=black,inner sep = .5pt] (b3) at (2,-1) {\color{blue}$b_3$};

\draw[thick] (a1) to (b1);
\draw[thick] (a1) to (b2);
\draw[thick] (a1) to (b3);
\draw[thick] (a2) to (b3);

\end{scope}

\node at (.5,0) {\huge$\Rightarrow$};

\begin{scope}[xshift=3.5cm]
\node at (1,-2) {$w' = {\color{blue}bb}{\color{magenta}a}{\color{blue}b}$};
\node at (-.8,1.5) {$G - \{{\color{magenta}a_1},{\color{magenta}a_3}\}:$};

\node[shape=circle,draw=black,inner sep = .5pt] (a2) at (0,0) {\color{magenta}$a_2$};
\node[shape=circle,draw=black,inner sep = .5pt] (b1) at (2,1) {\color{blue}$b_1$};
\node[shape=circle,draw=black,inner sep = .5pt] (b2) at (2,0) {\color{blue}$b_2$};
\node[shape=circle,draw=black,inner sep = .5pt] (b3) at (2,-1) {\color{blue}$b_3$};

\draw[thick] (a2) to (b3);

\end{scope}

\end{tikzpicture}
\caption{An example for a word $w$ that is not a palindrome and the decoder can only contain one of the words~$ab$ or~$ba$.
The word~$w$ starts and ends with~$a$-runs of same length (namely, length 1), so there needs to be a vertex in~$V_a$ that is adjacent to all vertices of~$V_b$, and there needs to be a vertex in~$V_a$ that has no neighbor in~$V_b$.
These vertices are~$a_1$ and~$a_3$, respectively.
To derive the unique decoder word for~$\Dmc$, we can remove both vertices from the graph and the first and last~$a$-run in~$w$ (yielding the word~$w'=bbab$) and receive the instance depicted on the right.
Since~$w$ was not a palindrome, $w'$ is not a palindrome either.
Now, the first and last~$b$ run of~$w'$ have different lengths in the example: the first run has length~$2$ and the last one has length~$1$.
So there are either more vertices in~$V_b$ that are adjacent to all (remaining) vertices of~$V_a$ than vertices of~$V_b$ that have no neighbors to the (remaining) vertices of~$V_a$, or vide versa.
In this example, there are more vertices in~$V_b$ that have no neighbors, so the word~$ba$ cannot be in the decoder~$\Dmc$.
Hence, each decoder for~$w'$ (and thus for~$w$) needs to contain~$ab$.}
\label{fig palindrome}
\end{figure}
}

\figPalin

\begin{lemma}\label{compute non palindrome word}
Let~$\{a,b\}$ be a one-sided letter pair such that~$w[a,b]$ is not a palindrome.
We can compute in polynomial time a decoder word~$d_{ab}\in \{ab,ba\}$, such that for each solution~$\Dmc$, $d_{ab}\in \Dmc$.
\end{lemma}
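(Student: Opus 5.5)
We plan to peel off matching runs from both ends of $w[a,b]$ at the same time as the corresponding vertices of $G[V_a\cup V_b]$, exactly as in \Cref{fig palindrome}. The argument is an induction on $|w[a,b]|$. Throughout, by \Cref{one sided} every solution contains exactly one of $ab$ and $ba$, so we only need to rule out one of the two candidates.

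\textbf{Setup and the decoder $ab$.} Write $u:=w[a,b]$ and let $\Dmc$ be a solution. Only edges between $V_a$ and $V_b$ matter here, and these depend solely on $u$ and on $\Dmc\cap\{ab,ba\}$. We may therefore work with the bipartite graph $B:=G[V_a\cup V_b]$ restricted to $E(V_a,V_b)$.

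Suppose $ab\in\Dmc$ and $ba\notin\Dmc$. Then a vertex at an $a$-position is adjacent to exactly the $b$-vertices at later positions, and a vertex at a $b$-position is adjacent to exactly the $a$-vertices at earlier positions. So the $B$-degree of the vertex at a given position is determined by the number of letters of the other kind before or after it.

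\textbf{Case distinction on the ends of $u$.}
\begin{itemize}
\item If the first and last letters of $u$ coincide, say both are $a$, let the first and last $a$-runs have lengths $p$ and $q$.
\begin{itemize}
\item Under $ab$, exactly $p$ vertices of $V_a$ are complete to $V_b$ and $q$ are isolated from $V_b$.
\item Under $ba$ these roles swap.
\item Note that no other $a$-vertex is complete or anticomplete, since $u$ has at least two $b$-runs between the end runs. The $B$-degrees of the $a$-vertices are monotone in position, and $B$ is neither complete nor empty.
\end{itemize}
If $p\neq q$, we compare with $B$: count the vertices of $V_a$ that are complete to $V_b$. This count determines $d_{ab}$ (if it is neither $p$ nor $q$, the instance is a no-instance, and any choice is vacuously correct).
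If $p=q$, we delete $p$ complete and $p$ anticomplete $a$-vertices from $B$, and both end runs from $u$. Any two complete $a$-vertices are twins in $B$, and so are any two anticomplete ones, so which ones we delete is irrelevant. The shortened word is still not a palindrome, because we removed equal-length end runs of the same letter. The pair stays one-sided, or else we already know we have a no-instance. By induction the shortened instance determines $d_{ab}$.
\item If the first and last letters differ, say $u$ starts with $a$ and ends with $b$, then under $ab$ the first $a$-vertices are complete to $V_b$. Under $ba$ they are anticomplete, and the last $b$-vertices are anticomplete or complete accordingly. Checking whether $B$ has an $a$-vertex complete to $V_b$ of the right multiplicity decides between the two.
\end{itemize}
The recursion terminates because $u$ is not a palindrome. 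Each step takes polynomial time.

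\textbf{Main obstacle.} The delicate step is the equal-length case. There we must argue that removing the end runs preserves the invariant that the decoder restricted to the smaller instance is still a solution, and that the choice of which twin vertices to remove is harmless. We would handle this with the same twin argument as in \Cref{lem verify decoder}: the removed vertices under either candidate correspond precisely to the end positions, so any solution for the original pair induces a solution for the reduced pair with the same decoder word.
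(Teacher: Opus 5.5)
Your proof is correct and follows essentially the same route as the paper. Both argue by induction on $|w[a,b]|$:
\begin{itemize}
\item if the end letters differ, you decide via the existence of a vertex of $V_a$ complete to $V_b$;
\item if the end letters coincide but the end runs have different lengths, you compare the number of complete (or anticomplete) $a$-vertices with those lengths;
\item otherwise, you strip the equal-length end runs together with the complete and anticomplete $a$-vertices, which preserves non-palindromicity.
\end{itemize}
One small inaccuracy: $u$ need not have two $b$-runs between its end $a$-runs (e.g.\ $u=abaa$). In that case there are no further $a$-vertices, so your claim that only the end-run vertices are complete or anticomplete still holds.
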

\begin{proof}
We describe a constructive proof that shows that there is decoder word~$d_{ab}\in \{ab,ba\}$, such that for each solution~$\Dmc$, $d_{ab}\in \Dmc$.
Let~$\Dmc$ be a solution for~$I$.
Note that~$\Dmc' := \Dmc \cap \{ab,ba\}$ is a solution for~$I':=(E(V_a,V_b),\{a,b\},\chi|_{V_a\cup V_b}, w[a,b])$.
We thus analyze~$\Dmc'$ and show via induction over the length of~$w[a,b]$ that there is no solution for~$I'$ where~$\Dmc$ contains both, $ab$ and $ba$.
Without loss of generality we assume that $w[a,b]$ starts with an~$a$.

\textbf{Base case.} For a non-palindrome~$w[a,b]$ of length at most~$4$, the statement trivially holds. 

\textbf{Inductive step.} We consider two cases.

\textbf{Case 1:} $w[a,b]$ ends with letter~$b$\textbf{.}
If there is a vertex~$v_a\in V_a$ that is adjacent to all vertices of~$V_b$, then the word~$ab$ is in every solution for~$I'$.
If there is a vertex~$v_a\in V_a$ that is adjacent no vertex of~$V_b$, then the word~$ba$ is in every solution for~$I'$.
If neither of these is the case, there is no solution for~$I'$.

\textbf{Case 2:} $w[a,b]$ ends with letter~$a$\textbf{.}
Thus, $w[a,b]$ starts and ends with an~$a$.
Let~$R_f$ and~$R_\ell$ be the indices of the first and the last~$a$-run in~$w[a,b]$, respectively.
Moreover, let~$X_a$ denote the vertices of~$V_a$ that are adjacent to all vertices of~$V_b$.
Similarly, let~$Y_a$ denote the vertices of~$V_a$ that are adjacent to no vertex of~$V_b$. 
As~$I'$ is a yes-instance of~\DR and~$\{a,b\}$ is a one-sided letter pair, it holds that (1)~$|R_f| = |X_a|$ and~$|R_\ell| = |Y_a|$ or (2)~$|R_f| = |Y_a|$ and~$|R_\ell| = |X_a|$.

If~$|R_f| = |R_\ell|$, let~$V_a' := V_a \setminus (X_a\cup Y_a)$ and let~$w'$ be the word obtained from~$w[a,b]$ by removing the first and the last~$a$-run.
By the induction hypothesis, there is a word~$d_{ab}\in \{ab,ba\}$ such that~$d_{ab}$ is contained in every solution for the instance~$(E(V_a',V_b),\{a,b\},\chi|_{V_a'\cup V_b}, w')$ of~\DR.
This is due to the fact that~$w'$ is again not a palindrome.

Otherwise, that is, if~$|R_f| \neq |R_\ell|$, let without loss of generality~$|R_f| > |R_\ell|$.
Moreover, recall that this implies~$|X_a| \neq |Y_a|$, that is, there are more vertices of~$V_a$ that are universal for the vertices of~$V_b$ than vertices of~$V_a$ that have no neighbors in~$V_b$, or vice versa.
If~$|R_f| = |X_a|$, then each solution has to contain~$ab$, as the first~$a$-run has to contain all vertices of~$V_a$ that are universal for all vertices of~$V_b$.
If~$|R_f| = |Y_a|$, then each solution has to contain~$ba$, as the first~$a$-run has to contain all vertices of~$V_a$ that have no neighbors in~$V_b$.
\lncsqed
\end{proof}

In the next step, we show that the choice of the decoder word for~$\{a,b\}$ also determines the choice of the decoder word for~$\{b,c\}$ if both~$w[a,b]$ and~$w[b,c]$ contain at least two~$b$-runs each.

\begin{figure}[t!]
\centering
\begin{tikzpicture}[scale=.8]
\newcommand{\colN}{magenta}
\newcommand{\colE}{blue}
\newcommand{\colA}{darkgray}
\newcommand{\colB}{olive}

\begin{scope}[xshift=-3.5cm]
\node at (2,-2) {$w = {\color{blue}b}{\color{olive}c}{\color{blue}b}{\color{magenta}a}{\color{olive}c}{\color{blue}b}$};
\node at (-.8,1.5) {$G=(V,E):$};

\node[shape=circle,draw=black,inner sep = .5pt] (a1) at (0,0) {\color{magenta}$a_1$};
\node[shape=circle,draw=black,inner sep = .5pt] (b1) at (2,1) {\color{blue}$b_1$};
\node[shape=circle,draw=black,inner sep = .5pt] (b2) at (2,0) {\color{blue}$b_2$};
\node[shape=circle,draw=black,inner sep = .5pt] (b3) at (2,-1) {\color{blue}$b_3$};
\node[shape=circle,draw=black,inner sep = .5pt] (c1) at (4,.5) {\color{olive}$c_1$};
\node[shape=circle,draw=black,inner sep = .5pt] (c2) at (4,-.5) {\color{olive}$c_2$};

\draw[thick] (a1) to (b1);
\draw[thick] (a1) to (b2);
\draw[thick] (b1) to (c1);
\draw[thick] (b1) to (c2);
\draw[thick] (b2) to (c2);

\end{scope}

\end{tikzpicture}
\caption{An example for a word~$w$ where~$\{a,b\}$ and~$\{b,c\}$ are one-sided letter pairs, $w[a,b]$ and~$w[b,c]$ contain at least two $b$-runs each, and where~$w[b,c]$ is a palindrome.
The choice for the word of~$\{bc,cb\}$ for the decoder is determined by the choice for the word of~$\{ab,bc\}$ for the decoder.
Note that the latter is fixed, as~$w[a,b]$ is not a palindrome.
Namely, for each decoder~$\Dmc$ which is a solution, we need to have $\Dmc\cap \{ab,ba\} = \{ba\}$.
This implies that the first two~$b$s of~$w$ are associated with the vertices~$b_1$ and~$b_2$.
Consequently, solution~$\Dmc$ cannot fulfill~$\Dmc\cap \{bc,cb\} = \{cb\}$, as otherwise, $b_3$ (which by necessity would be associated with the last~$b$ in~$w$) would have a common edge with~$c_1$ in the respective letter graph~$G(\Dmc,w)$.} 
\label{fig cascade}
\end{figure}

\begin{lemma}\label{compute related word}
Let~$\{a,b\}$ and~$\{b,c\}$ be distinct one-sided letter pairs, such that~$w[a,b]$ and~$w[b,c]$ contain at least two $b$-runs each, and where~$w[b,c]$ is a palindrome.
Then for each~$d_{ab}\in \{ab,ba\}$, in polynomial time, we can compute~$d_{bc}\in \{bc,cb\}$, such that for each solution~$\Dmc$ with~$d_{ab}\in \Dmc$, $\Dmc$ also contains~$d_{bc}$. 
\end{lemma}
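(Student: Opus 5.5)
Fix $d_{ab}\in\{ab,ba\}$ and restrict attention to solutions $\Dmc$ with $d_{ab}\in\Dmc$; by \Cref{one sided}, $\Dmc\cap\{ab,ba\}=\{d_{ab}\}$. The plan is to show that, once $d_{ab}$ is fixed, the graph $G[V_a\cup V_b]$ together with $w[a,b]$ determines, for each $b$-run of $w[a,b]$, a set of vertices of $V_b$ that must be placed on the positions of that run. The edges between these $b$-vertices and $V_c$ then decide which of $bc,cb$ can work. Whenever the choice is not forced, we may output an arbitrary $d_{bc}$, since the statement only constrains solutions.

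\textbf{Step 1 (the $b$-blocks forced by $\{a,b\}$).} Index the positions of $w[a,b]$. Under $\Dmc'=\{d_{ab}\}$, a $b$-vertex placed at a position is adjacent exactly to the $a$-vertices before it (if $d_{ab}=ab$) or after it (if $d_{ab}=ba$). Hence its $V_a$-neighborhood is a prefix- or suffix-set of the $a$-positions, and is the same for all positions within one $b$-run. Distinct $b$-runs of $w[a,b]$ are separated by at least one $a$, so distinct $b$-runs yield distinct neighborhood sizes $|N(v)\cap V_a|$. Consequently the partition of $V_b$ by $|N(v)\cap V_a|$ must match the $b$-runs of $w[a,b]$, in an order that is determined by $d_{ab}$: increasing along $w$ if $d_{ab}=ab$, decreasing if $d_{ab}=ba$. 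Because $w[a,b]$ has at least two $b$-runs, this splits $V_b$ into at least two nonempty ordered blocks $B_1,\dots,B_r$ that are all computable. If the sizes do not match, no solution with $d_{ab}$ exists and any output is valid.

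\textbf{Step 2 (transfer to $w[b,c]$).} The $b$-positions of $w$ are totally ordered, and both $w[a,b]$ and $w[b,c]$ are subsequences of $w$ that contain all $b$'s. So in every such solution, the $i$-th $b$ of $w$ receives a vertex of a known block. In particular, the first $b$-run $R_f$ of $w[b,c]$ is a set of $b$-positions with known blocks, and so is its last $b$-run $R_\ell$. These two runs are distinct because $w[b,c]$ has at least two $b$-runs. Under $cb$, a vertex in $R_f$ has no $c$-neighbours (if $w[b,c]$ starts with $b$), and under $bc$ it is adjacent to all of $V_c$. Since $w[b,c]$ is a palindrome, the case where it starts with $c$ is symmetric: start and end are the same letter, and the roles of the first and last run simply swap.

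\textbf{Step 3 (decision).} Test the hypothesis $bc$: check whether the vertices forced onto the first $b$-run of $w[b,c]$ are all fully adjacent to $V_c$ and those forced onto the last $b$-run have no $c$-neighbours, with the assignment adjusted correctly when the palindrome starts with $c$. Test $cb$ analogously, with the roles exchanged. Output whichever hypothesis survives, or either one if both fail. Since $\{b,c\}$ is one-sided, a solution must pick exactly one of $bc,cb$, so correctness needs only that the two hypotheses cannot both survive.

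The main obstacle is exactly this exclusivity. It fails only if the forced vertex sets of the first and last run are each both universal and empty towards $V_c$, which is impossible since $V_c\neq\emptyset$. A subtlety remains: the blocks $B_j$ may not align with whole runs of $w[b,c]$, since a $b$-run of $w[b,c]$ can span several $a$-separated blocks or cut a block in the middle. If the first (or last) run of $w[b,c]$ covers only part of a block, we know only that some vertices of that block lie there. I would handle this by comparing counts: under $bc$, the number of vertices in the relevant blocks that are universal to $V_c$ must be at least the run length, and under $cb$ the same count applies to vertices with no $c$-neighbours. I would then argue that a $b$-vertex cannot be both universal to and isolated from nonempty $V_c$, so at least one of the two count conditions fails.
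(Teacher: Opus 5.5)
Your Steps 1 and 2 are sound and close to the paper's setup. Fixing $d_{ab}$ determines, for every $b$-position of $w$, the block of $V_b$ (by $|N(v)\cap V_a|$) that must occupy it, and the paper's set $X_b$ is exactly your first block $B_1$. The gap is in the step you yourself flag as the obstacle, exclusivity, which is the heart of the lemma.

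Your repair for partially covered blocks does not follow. That no single $b$-vertex is both universal to and isolated from $V_c$ does not prevent a block from containing at least $x$ universal \emph{and} at least $x$ isolated vertices. Take $w[a,b]=bbababb$ and $w[b,c]=bcbbbcb$. Then $x=1$, the blocks have sizes $2,1,2$, and neither end $b$-run of $w[b,c]$ contains a whole block. A graph in which each outer block has one vertex universal and one vertex isolated towards $V_c$ passes both your $bc$ test and your $cb$ test. So ``the two hypotheses cannot both survive'' is false for your tests. The lemma only needs a weaker, conditional statement: the wrong test fails whenever the other hypothesis is realized by a genuine solution. Your argument never uses such a solution or the positions of vertices in it. Separately, if $w[b,c]$ starts with $c$, no $b$-vertex is universal or isolated towards $V_c$, so ``the roles swap'' does not give the right test; you need maximum/minimum degree towards $V_c$ instead.

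The missing idea, as in the paper, is to compare $B_1$ with the set $M$ of vertices of $V_b$ of minimum degree towards $V_c$, which is a graph invariant. By reversal symmetry, assume $|B_1|\le|B_r|$. Then $B_1$ occupies at most $|V_b|/2$ of the $b$-positions. Since $w[b,c]$ is a palindrome with at least two $b$-runs, its last $b$-run also has at most $|V_b|/2$ positions, so these two position sets are disjoint. Under $bc$, $M$ is exactly the set of vertices on the last $b$-run of $w[b,c]$, so $M\cap B_1=\emptyset$. Under $cb$, $M$ is the set on the first $b$-run of $w[b,c]$, which contains the vertex on the first $b$ of $w$, and that vertex lies in $B_1$. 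Hence two solutions containing $d_{ab}$ cannot disagree on $\{bc,cb\}$. The paper then computes $d_{bc}$ by checking $\{d_{ab},bc\}$ and $\{d_{ab},cb\}$ on the subinstance on $V_a\cup V_b\cup V_c$ with \Cref{lem verify decoder}.

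Your count tests can be rescued by a positional argument of the same kind. Under an actual $bc$-solution, the $x$ minimum-degree vertices sit on the last $x$ $b$-positions. Since $B_1\neq V_b$, it contains fewer than $\min(|B_1|,x)$ of them, so the $cb$ test fails. That argument is not in your proposal.
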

\begin{proof}
An example for this scenario is illustrated in~\Cref{fig cascade}.
Let~$d_{ab}\in \{ab,ba\}$.
First, we show that if there is a solution that contains~$d_{ab}$, then there is a unique~$d_{bc}\in \{bc,cb\}$, such that for each solution~$\Dmc$ with~$d_{ab}\in \Dmc$, $\Dmc$ also contains~$d_{bc}$.  
Assume towards a contradiction that there are two solutions~$\Dmc_1$ and~$\Dmc_2$ with~$\{d_{ab},bc\}\subseteq \Dmc_1$ and~$\{d_{ab},cb\}\subseteq \Dmc_2$.
For a letter pair~$p\in \{\{a,b\},\{b,c\}\}$, let~$R^p_f$ and~$R^p_\ell$ be the indices of the~$b$'s in~$w$ of the first and the last~$b$-run in~$w[p]$, respectively.
That is, if the length of the first~$b$ run in~$w[p]$ is~$\alpha$, then~$R^p_f$ contains the indices of the first~$\alpha$ many~$b$'s in~$w$.
Note that~$|R^{\{b,c\}}_f| = |R^{\{b,c\}}_\ell| < \frac{|V_b|}{2}$, as~$w[b,c]$ is a palindrome.
Assume without loss of generality that the length~$|R^{\{a,b\}}_f|$ of the first~$b$-run in~$w[a,b]$ is at most the length~$|R^{\{a,b\}}_\ell|$ of the last~$b$-run in~$w[a,b]$.
This implies that~$|R^{\{a,b\}}_f| < \frac{|V_b|}{2}$.
Consequently, $R^{\{a,b\}}_f \cap R^{\{b,c\}}_\ell = \emptyset$.

Under both~$\Dmc_1$ and~$\Dmc_2$, the vertices of~$V_b$ from~$G$ associated with the indices of~$R^{\{a,b\}}_f$ are the same, as both solutions contain only~$d_{ab}$ from~$\{ab,ba\}$.
Let~$X_b$ be this set of vertices of~$V_b$.
As~$\Dmc_1$ is a solution and~$bc\in \Dmc_1$, there is no vertex of~$X_b$ that has smallest degree towards the vertices of~$V_c$ among all vertices of~$V_b$, as~$R^{\{a,b\}}_f \cap R^{\{b,c\}}_\ell = \emptyset$.
However, since~$\Dmc_2$ is a solution and~$cb\in \Dmc_2$, the vertex associated to the first~$b$ (which is a vertex of~$X_b$) under~$\Dmc_2$ has the smallest degree towards the vertices of~$V_c$ among all vertices of~$V_b$.
This contradicts the assumption that both~$\Dmc_1$ and~$\Dmc_2$ are solutions.

It remains to show how to compute the respective word~$d_{bc}$ in polynomial time.
Consider the instance~$I':=(E(V_b,V_a\cup V_c), \{a,b,c\}, \chi|_{V_a\cup V_b\cup V_c}, w[\{a,b,c\}])$ of~\DR.
By the above arguments, at most one of~$\{d_{ab},bc\}$ and~$\{d_{ab},cb\}$ is a solution for~$I'$.
As we can construct~$I'$ in polynomial time, we can check in polynomial time (see~\Cref{lem verify decoder}) which of the two possible candidate solutions is in fact a solution for~$I'$.
If one of both is a solution, we output the respective word of~$\{bc,cb\}$.
Otherwise, that is if neither of both is a solution, then there is no solution that contains~$d_{ab}$, and we can correctly output any word from~$\{bc,cb\}$. 
\lncsqed
\end{proof}

For each letter~$a\in \Sigma$, let~$P_a$ denote the set of letters~$b$ of~$\Sigma\setminus \{a\}$ for which~$\{a,b\}$ is a one-sided letter pair and~$w[a,b]$ has at least two~$a$-runs.
Based on this definition, we characterize when a given decoder is a solution.

\begin{lemma}\label{equiv for valid decoder}
Let~$\Dmc\subseteq \Sigma^2$ and assume that for each~$\{a,b\}\in \mpp$, $w[a,b]$ contains at least two~$a$-runs or at least two~$b$-runs.
Then~$\Dmc$ is a solution for~$I$ if and only if 
\begin{itemize}
\item\label{solution cl vs is} for each letter~$a\in \Sigma$, $\Dmc \cap \{aa\}$ is a solution for the \DR-instance~$(E(V_a), \{a\}, \chi|_{V_a}, w[a])$, 
\item\label{solution twosided} for each~$\{a,b\}\in \binom{\Sigma}{2}\setminus \mpp$, $\Dmc \cap \{ab,ba\}$ is a solution for the \DR-instance~$(E(V_a,V_b), \{a,b\}, \chi|_{V_a\cup V_b}, w[a,b])$, and
\item\label{solution onesided} for each letter~$a\in \Sigma$, $\Dmc \cap \{ab,ba\mid b\in P_a\}$ is a solution for the~\DR-instance $(\bigcup_{b\in P_a} E(V_a,V_b), P_a \cup \{a\}, \chi|_{\cup_{b\in P_a\cup\{a\}}V_b}, w[a,P_a])$.
\end{itemize}
\end{lemma}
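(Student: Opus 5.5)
The forward direction is immediate. If $\Dmc$ is a solution, fix an isomorphism $f\colon V\to[|V|]$ from $G$ to $G(\Dmc,w)$ that respects $\chi$. Restricting $f$ to any colour-closed subset $U=\bigcup_{c\in\Sigma'}V_c$ gives an isomorphism onto the letter graph of $w[\Sigma']$, once positions are renumbered order-preservingly. This restriction only sees the edges inside $U$ and the decoder words over $\Sigma'$. Restricting further to the edge subsets named in the three items (edges inside $V_a$; $E(V_a,V_b)$; $\bigcup_{b\in P_a}E(V_a,V_b)$) and to the decoder words of the corresponding shape shows that each of the three intersections of $\Dmc$ solves the corresponding sub-instance.

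For the converse, the plan is to construct one global bijection $f\colon V\to[|V|]$, colour by colour. The key observation is that whether $\{u,v\}$ with $\chi(u)=a$, $\chi(v)=b$ is mapped correctly depends only on the relative order of $f(u)$ and $f(v)$. So it suffices to pick, for every letter $a$, a bijection $f_a$ from $V_a$ onto the positions of $a$ in $w$ such that every pair of letters is simultaneously served by the chosen $f_a$ and $f_b$. Pairs inside one colour class impose nothing: by \Cref{color internal} and the first item, $V_a$ is a clique or an independent set matching $\Dmc\cap\{aa\}$, so any $f_a$ works for them. Non-one-sided pairs impose nothing either: by \Cref{identical between colors} and the second item, all or no edges exist between $V_a$ and $V_b$, consistently with $\Dmc\cap\{ab,ba\}$, so any orders work. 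Hence only the one-sided pairs constrain the choice of the $f_a$.

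For a one-sided pair $\{a,b\}$, the hypothesis together with \Cref{sanity} tells us that $w[a,b]$ has at least two $a$-runs or at least two $b$-runs, so $b\in P_a$ or $a\in P_b$. I would orient each one-sided pair towards a letter having at least two runs, say $a$ (so $b\in P_a$), and then build the bijections so that the third item for the centre $a$ handles that pair. The third item gives, for each $a$, a bijection $g^a$ on $V_a\cup\bigcup_{b\in P_a}V_b$ realising all these pairs at once. The crux, and the expected main obstacle, is gluing: a letter $b$ may lie in several $P_a$'s and also be a centre itself, so different $g^a$ may prescribe different orders on $V_b$. The plan is to show that when $b$ has only one run in $w[a,b]$, the positions of $b$ inside $w[a,b]$ form a single block. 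In that case the order of $V_b$ within that block is irrelevant for the pair $\{a,b\}$, provided the map from $V_b$ to positions is chosen to respect the block structure. Concretely, the $b$-positions of $w[a,b]$ form a single run, and the relative order of any $a$-vertex versus all $b$-vertices is fixed, so any bijection for $V_b$ works for that pair.

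Therefore I would set $f_a:=g^a|_{V_a}$ for every letter $a$, and verify each one-sided pair $\{a,b\}$ as follows. If $b$ has a single run in $w[a,b]$, only $f_a$ matters, and it agrees with $g^a$ on $V_a$. If both letters have at least two runs, we need compatibility of $g^a|_{V_a}$ with $g^b|_{V_b}$, which is the delicate case. There I would argue that the vertices of $V_a$ at the two ends of $w[a,b]$ are forced by degree towards $V_b$ up to generalized twins, peeling off runs as in \Cref{compute non palindrome word} and \Cref{lem verify decoder}. By induction on $|w[a,b]|$, any two bijections realising the pair differ only by swapping vertices that are twins with respect to $V_a\cup V_b$. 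This makes $g^a|_{V_a}$ combined with $g^b|_{V_b}$ valid for $\{a,b\}$. I expect this twin-swapping compatibility argument to be the hardest step.
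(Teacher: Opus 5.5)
Your proof is correct, but it takes a different route from the paper for the backward direction.

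\textbf{The paper's route.} It argues by contradiction through the recursive verifier VerifyDecoder. It takes the first recursive call that answers NO, whose leading letter is $a$. It then produces a valid candidate vertex from a successful run of the verifier on the item-3 instance for $a$. For this it uses that, for every $b\notin P_a$, all vertices of $V_a$ see $V_b$ identically, which follows from items 1, 2 and item 3 for $b$.

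\textbf{Your route.} You build the isomorphism directly:
\begin{itemize}
\item You reduce validity to pairwise conditions that depend only on relative orders.
\item You set $f_a:=g^a|_{V_a}$ for every letter $a$.
\item You glue the pieces using the fact that, for a one-sided pair with a fixed decoder word, the run of $w[a,b]$ containing each vertex is forced by its neighbourhood in the other class.
\end{itemize}

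\textbf{What each route buys.} Your gluing fact is exactly what the paper tacitly needs when it claims one may assume $w_a^i=v_a^i$ because they are generalized twins. That claim is only justified there via letters outside $P_a$. For $b\in P_a$, the global run and the sub-instance run of the verifier may have removed different vertices of $V_b$, so the forced-run argument is needed. Your route therefore makes explicit the step the paper passes over. The paper's route is shorter because it reuses the verifier.

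\textbf{Two remarks.}
\begin{itemize}
\item State explicitly that $g^a$ and $g^b$ realise the pair $\{a,b\}$ with the same word $\Dmc\cap\{ab,ba\}$. The compatibility fails without this: with opposite words on a palindrome, the run assignments would be reversed.
\item The induction by peeling runs off the ends is unnecessary. With decoder word $ab$, the neighbourhood in $V_b$ of an $a$-vertex is exactly the set of $b$-vertices placed after it. Its size is the number of $b$'s after the vertex's $a$-run, which strictly decreases from run to run. Hence the run of every vertex is determined by its degree into the other class, and vertices in the same run are twins towards that class. That directly shows $(g^a|_{V_a},g^b|_{V_b})$ realises the pair.
\end{itemize}
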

\begin{proof}
$(\Rightarrow)$
If~$\Dmc$ is a solution for~$I$, then all item need to hold, as they describe exactly the edge sets of subgraphs of~$G$.

$(\Leftarrow)$
Assume that all three items hold.
We show that~$\Dmc$ is a solution for~$I$.
Assume towards a contradiction that this is not the case.
Hence, \Cref{algo verify} outputs NO on input~$(G,\chi,w,\Dmc)$.
Let~$(G^*=(V^*,E^*),\Sigma,\chi^*,w^*,\Dmc)$ be the input of the recursive call at which the algorithm first outputs NO.
Moreover, let~$a$ be the first letter of~$w^*$ and let~$(v_a^1,\dots,v_a^p)$ be the sequence of vertices of~$V_a\setminus V^*$ in which they were chosen by previous calls of the algorithm.
Since the algorithm outputs NO, there is no vertex~$v_a^{p+1} \in V_a \cap V^*$, such that for each~$b\in \Sigma$, $N(v_a^{p+1}) \cap V_b \cap V^* = N_b \cap V^* \setminus \{v\}$.
To achieve a contradiction, we, however, show that such a vertex exists.
To this end, we first observe that for each~$b\in \Sigma \setminus P_a$, each vertex of~$V_a$ has the exactly same neighborhood towards the vertices of~$V_b$, that is, $N(v) \cap V_b \setminus \{w\} = N(w) \cap V_b \setminus \{v\}$ for each two vertices~$v,w\in V_a$.  
Here, recall that~$P_a$ contains all letters~$b$ for which $w[a,b]$ contains at least two~$a$-runs.

First, for~$b=a$, the statement holds, as~\Cref{solution cl vs is} implies that~$V_a = V_b$ is a clique or an independent set.
For~$b\in \Sigma$ with~$b\neq a$, such that~$\{a,b\}$ is not one-sided, the statement holds as \Cref{solution twosided} implies that~$G$ contains either all or no edge between~$V_a$ and~$V_b$.
Finally, consider a letter~$b\in \Sigma$, such that~$\{a,b\}$ is one-sided but~$b\notin P_a$.
The latter implies that~$w[a,b]$ contains only a single~$a$-run.
By the condition that there are no one-sided letter pairs that contain only a single run for each letter, this implies that~$w[a,b]$ contains at least/exactly two~$b$-runs.
Hence, $a\in P_b$.
Thus, \Cref{solution onesided} (for reversing the roles of~$a$ and~$b$) implies that there is a subset~$V_b'\subseteq V_b$, such that for each vertex~$v\in V_a$, $N(v)\cap V_b = V_b'$.
Consequently, for each letter~$b\in \Sigma \setminus P_a$, $N(v) \cap V_b \setminus \{w\} = N(w) \cap V_b \setminus \{v\}$ for each two vertices~$v,w\in V_a$.

Intuitively, this implies that the algorithm outputted NO, since there is no vertex~$v_a^{p+1}\in V_a \cap V^*$, such that for some~$b\in P_a$, $N(v_a^{p+1}) \cap V_b \cap V^* \neq N_b\cap V^*$.
However, since \Cref{solution onesided} holds, \Cref{algo verify} outputs YES on the input of \Cref{solution onesided}.
This implies that there is an ordering~$(w_a^1,\dots,w_a^{|V_a|})$ in which the algorithm selected the vertices of~$V_a$ in the recursive calls.
Note that~$w_a^i$ and~$v_a^i$ for~$1 \leq i \leq p$ are guaranteed generalized twins due to the observation that for all~$b\in \Sigma\setminus P_a$, all vertices of~$V_a$ have the same neighborhood towards~$V_b$.
Hence, we can further assume without loss of generality that~$w_a^i = v_a^i$ for each~$1 \leq i \leq p$, as generalized twins of the same color are indistinguishable.
This implies that $w_a^{p+1}$ is a vertex of~$V_a \setminus \{v_a^1,\dots, v_a^p\} = V_a \cap V^*$.
Hence, $N(w_a^{p+1}) \cap V_b\cap V^* = N_b \cap V^*$ for each~$b\in P_a$.
The contradicts the assumption that the algorithm outputted NO on input~$(G^*,\Sigma,\chi^*,w^*,\Dmc)$. 
Consequently, $\Dmc$ is a solution for~$I$.
\lncsqed
\end{proof}

We are now ready to describe our algorithm to solve~\DR.
To this end, we first make some sanity checks.
For each letter~$a\in \Sigma$, we check whether~$G[V_a]$ is a clique or an independent set.
If this is not the case, we deal with a no-instance and output this.
Similarly, we check for each one-sided letter pair~$\{a,b\}$, whether~$w[a,b]$ contains at least two~$a$-runs or at least two~$b$-runs.
If this is not the case, we also deal with a no-instance (see~\Cref{sanity}) and output this.
Note that all these checks can be done in polynomial time and that the latter sanity check implies that for each one-sided letter pair~$\{a,b\}$, $a\in P_b$ or~$b\in P_a$ holds.

If we have not outputted yet that we are dealing with a no-instance, we proceed as follows:
Recall that~$\mpp$ denotes the collection of all one-sided letter pairs.
Based on our previous observations, we describe a reduction of our problem to~\textsc{2-SAT}.
To this end, we describe a formula~$F$ over the variable set~$\{ab,ba\mid \{a,b\}\in \mpp\}$.
\begin{enumerate}
\item\label{clause one side} For each~$\{a,b\}\in \mpp$, we add the clauses~$ab \lor ba$ and $\neg ab \lor \neg ba$ that ensure that we chose exactly one of~$\{ab,ba\}$ for the decoder. 
\item\label{clause not palindrome} For each~$\{a,b\}\in \mpp$ where~$w[a,b]$ is not a palindrome, we add the clause~$d_{ab}$, where~$d_{ab}$ is the decoder word that is contained in every solution and that we can compute in polynomial time due to~\Cref{compute non palindrome word}.
\item\label{clause palindrome} For each two distinct $\{a,b\}\in \mpp$ and~$\{b,c\}\in \mpp$ where both~$w[a,b]$ and~$w[b,c]$ contain at least two~$b$-runs and where~$w[b,c]$ is a palindrome, do the following: 
For each~$d_{ab}\in \{ab,ba\}$, in polynomial time compute~$d_{bc}\in \{bc,cb\}$, such that for each solution~$\Dmc$ with~$d_{ab}\in \Dmc$, $\Dmc$ also contains~$d_{bc}$ (see~\Cref{compute related word}) and add the clause~$\neg d_{ab} \lor d_{bc}$.
\item\label{clauses blocks} For each~$a\in \Sigma$, let~$P_a^p$ be the subset of~$P_a$ of those letters~$b$ for which~$w[a,b]$ is a palindrome and let~$P_a^n := P_a\setminus P_a^p$.
We consider two cases in which we are interested in potential solutions for the subinstance $I_a:=(\bigcup_{b\in P_a} E(V_a,V_b), P_a \cup \{a\}, \chi|_{\cup_{c\in P_a\cup\{a\}}V_c}, w[P_a \cup\{a\}])$ of \DR.

\begin{enumerate}
\item\label{clause some non palindromes} If~$P_a^n \neq \emptyset$, let~$b\in P_a^n$.
For each~$c\in P_a^n$ let~$d_{ac}$ be the word of~$\{ac,ca\}$ outputted by the algorithm behind~\Cref{compute non palindrome word}.
Moreover, for each~$c\in P_a^p$, let~$d_{ac}$ be the word of~$\{ac,ca\}$ for the word~$d_{ab}$ outputted by the algorithm behind~\Cref{compute related word}.
If~$\{d_{ac}\mid c\in P_a\}$ is not a solution for~$I_a$, we add the clause~$\neg d_{ab}$ (which in fact leads to an unsatisfiable formula, as we also have the clause~$d_{ab}$ due to~\Cref{clause not palindrome}).

\item\label{clause all palindromes} If~$P_a^n = \emptyset$, let~$b\in P_a^p$.
For each~$d_{ab}\in \{ab,ba\}$, do the following:
For each~$c\in P_a^p\setminus \{b\}$, let~$d_{ac}$ be the word of~$\{ac,ca\}$ for the word~$d_{ab}$ outputted by the algorithm behind~\Cref{compute related word}.
If~$\{d_{ac}\mid c\in P_a\}$ is not a solution for~$I_a$, we add the clause~$\neg d_{ab}$.
\end{enumerate}
\end{enumerate}

To prove our polynomial time algorithm, we now show the following.

\begin{lemma}
It holds that~$I$ is a yes-instance of~\DR if and only if the~$2$-SAT formula~$F$ is satisfiable.
\end{lemma}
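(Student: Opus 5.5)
We prove the two directions separately, and for the hard direction we use \Cref{equiv for valid decoder} to reduce global correctness to local checks. Throughout, we assume the sanity checks have passed. In particular, for every $a\in\Sigma$, $V_a$ is a clique or an independent set. Also, every one-sided pair $\{a,b\}$ has at least two $a$-runs or at least two $b$-runs in $w[a,b]$. Hence \Cref{equiv for valid decoder} is applicable.

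\textbf{Soundness ($\Rightarrow$).} Let $\Dmc$ be a solution and set each variable $xy$ (for $\{x,y\}\in\mpp$) to true iff $xy\in\Dmc$. We verify each clause family.
\begin{itemize}
\item Clauses of \Cref{clause one side} hold by \Cref{one sided}.
\item Clauses of \Cref{clause not palindrome} hold by \Cref{compute non palindrome word}.
\item Clauses of \Cref{clause palindrome} hold by \Cref{compute related word}.
\item For \Cref{clauses blocks}, suppose some clause $\neg d_{ab}$ were violated, so $d_{ab}\in\Dmc$. By the two lemmas above, $\Dmc$ then contains every $d_{ac}$ produced in that step. Since each one-sided pair contributes exactly one word, $\Dmc\cap\{ac,ca\mid c\in P_a\}=\{d_{ac}\mid c\in P_a\}$. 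This set is a solution for $I_a$, because restricting a solution to a colour-closed edge subset preserves being a solution (the $(\Rightarrow)$ direction of \Cref{equiv for valid decoder}). That contradicts the reason the clause was added.
\end{itemize}
Here I must check one point: for $c\in P_a^p$, the pairs $\{a,b\}$ and $\{a,c\}$ both have at least two $a$-runs, so \Cref{compute related word} (with $a$ in the role of the middle letter) really applies.

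\textbf{Completeness ($\Leftarrow$).} Take a satisfying assignment $\beta$ and build $\Dmc$ as follows.
\begin{itemize}
\item Include $aa$ iff $V_a$ is a clique.
\item For non-one-sided pairs $\{a,b\}$, include both $ab,ba$ iff $E(V_a,V_b)$ is complete, and neither otherwise.
\item For $\{a,b\}\in\mpp$, include exactly the word made true by $\beta$.
\end{itemize}
The first two items of \Cref{equiv for valid decoder} hold trivially, using \Cref{color internal} and \Cref{identical between colors}. It remains to prove the third item: for each $a$, $\Dmc_a:=\Dmc\cap\{ab,ba\mid b\in P_a\}$ solves $I_a$. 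If $P_a=\emptyset$ this is trivial.

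If $P_a^n\neq\emptyset$, \Cref{clause not palindrome} forces $\beta$ to agree with $d_{ac}$ for every $c\in P_a^n$, including the chosen $b$. \Cref{clause palindrome} clauses then force agreement with the computed $d_{ac}$ for $c\in P_a^p$. So $\Dmc_a$ equals the tested set $\{d_{ac}\}$. Since the clause $\neg d_{ab}$ cannot have been added (it would contradict satisfiability), that set is a solution for $I_a$.

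If $P_a^n=\emptyset$, let $d_{ab}$ be the word chosen by $\beta$. The \Cref{clause palindrome} clauses force $\beta$ to pick the computed $d_{ac}$ for all $c\in P_a^p\setminus\{b\}$. Satisfiability again shows that $\neg d_{ab}$ was not added, so $\Dmc_a$ solves $I_a$.

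\textbf{Main obstacle.} The delicate step is making sure \Cref{clause palindrome} supplies implications from the fixed representative $b$ to every $c\in P_a^p$ with the correct orientation. The lemma must be applied with $a$ as the shared middle letter, i.e.\ to pairs $\{b,a\},\{a,c\}$ with $w[a,c]$ a palindrome. The formula lists these pairs with the shared letter named $b$, so I would state the renaming explicitly. I also need to rule out the case that a clause from \Cref{clause palindrome} was vacuous because neither candidate was a solution. In that case any $\beta$ choosing $d_{ab}$ is anyway excluded by the \Cref{clauses blocks} clause. So consistency between the two clause families is what needs care.
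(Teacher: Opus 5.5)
Your proof is correct and takes essentially the same route as the paper. The forward direction uses the same assignment, setting exactly the words of $\Dmc$ to true, and the backward direction checks the same decoder $A\cup B$ against the three items of \Cref{equiv for valid decoder}. The only difference is that you spell out why the unit clauses of \Cref{clause not palindrome} and the implications of \Cref{clause palindrome} (with $a$ as the shared letter) force $\Dmc\cap\{ac,ca\mid c\in P_a\}$ to equal the set tested in \Cref{clauses blocks}; the paper treats that step as immediate.
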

\begin{proof}
$(\Rightarrow)$
Let~$\Dmc$ be a solution for~$I$.
We show that setting exactly the variables of~$A:=\Dmc\cap \{ab,ba\mid \{a,b\}\in \mpp\}$ to True satisfies the formula~$F$.
As~$\Dmc$ is a solution and for each~$\{a,b\}\in \mpp$ is a one-sided letter pair, \Cref{one sided} implies that~$|\Dmc\cap \{ab,ba\}| = 1$.
Hence, the clauses of~\Cref{clause one side} are satisfied.
Further, \Cref{compute non palindrome word} implies that for each~$\{a,b\}\in\mpp$ where~$w[a,b]$ is not a palindrome, the word~$d_{ab}$ is part of~$\Dmc$.
Hence, the clauses of~\Cref{clause not palindrome} are satisfied.
Now, let $\{a,b\}\in \mpp$ and~$\{b,c\}\in \mpp$ be distinct letter pairs where both~$w[a,b]$ and~$w[b,c]$ contain at least two~$b$-runs and where~$w[b,c]$ is a palindrome.
Let~$d_{ab}$ be the unique word of~$\{ab,ba\}$ that is contained in~$\Dmc$.
By \Cref{compute related word}, there is a unique~$d_{bc}\in \{bc,cb\}$, such that~$d_{bc}\in \Dmc$.
Hence, the clauses of~\Cref{clause palindrome} are satisfied.
Finally, we consider the clauses of~\Cref{clauses blocks}.
To this end, let~$a\in \Sigma$ and we distinguish whether~$P_a^n = \emptyset$.

If~$P_a^n \neq \emptyset$, let~$b\in P_a^n$ be as chosen in~\Cref{clause some non palindromes}. 
As~$\Dmc$ is a solution, \Cref{compute non palindrome word} implies that for each~$c\in P_a^n$, $d_{ac}\in \Dmc$.
In particular, $d_{ab}\in \Dmc$.
Thus, \Cref{compute related word} implies that~$d_{ac}\in \Dmc$ for each~$c\in P_a^p$.
This in particular implies that~$\Dmc\cap \{ac,ca\mid c\in P_a\} = \{d_{ac}\mid c\in P_a\}$ is a solution for~$I'$.
Hence, the clause~$\neg d_{ab}$ was not added in~\Cref{clause some non palindromes}, which implies that the clauses of~\Cref{clause some non palindromes} for letter~$a$ are satisfied. 
 
If~$P_a^n = \emptyset$, let~$b\in P_a^p$ be as chosen in~\Cref{clause all palindromes}. 
Moreover, let~$d_{ab}$ be the unique word of~$\{ab,ba\}\cap \Dmc$.
As~$\Dmc$ is a solution, \Cref{compute related word} implies that~$d_{ac}\in \Dmc$ for each~$c\in P_a^p$.
This in particular implies that~$\Dmc\cap \{ac,ca\mid c\in P_a\} = \{d_{ac}\mid c\in P_a\}$ is a solution for~$I'$.
Hence, the clause~$\neg d_{ab}$ was not added in~\Cref{clause all palindromes}, which implies that the clauses of~\Cref{clause all palindromes} for letter~$a$ are satisfied. 

Thus, $F$ is satisfied by assigning exactly the variables of~$A$ to True.

$(\Leftarrow)$
Let~$A$ be exactly the variables assigned to True by an arbitrary but fixed satisfying assignment of~$F$.
We show that there is a solution~$\Dmc$ for~$I$ with~$\Dmc \cap \{ab,ba\mid \{a,b\}\in \mpp\} = A$.
In particular, we show that~$\Dmc:= A \cup B$ is a solution for~$I$, where~$B:= \{aa\mid a\in \Sigma, G[V_a]~\text{is a clique}\} \cup \{ab,ba\mid \{a,b\}\in \binom{\Sigma}{2} \setminus \mpp, |E(V_a,V_b)| = |V_a| \cdot |V_b|\}$.
As we only constructed the formula~$F$ after our sanity check in which we checked whether for each~$a\in \Sigma$, $G[V_a]$ is a clique or an independent set, \Cref{color internal} and~\Cref{identical between colors} imply that the choice of~$B$ satisfies the first two items of~\Cref{equiv for valid decoder}.
To show that~$\Dmc$ is a solution, it thus remains to show that also the last point of~\Cref{equiv for valid decoder} holds.
This follows immediately by the fact that the clauses of~\Cref{clause some non palindromes} and~\Cref{clause all palindromes} are satisfied by setting exactly the variables of~$A$ to True.
Hence, \Cref{equiv for valid decoder} implies that~$\Dmc$ is a solution for~$I$.
\lncsqed
\end{proof}

This thus implies the following, as the formula~$F$ can be computed in polynomial time and the previous sanity check also runs in polynomial time.

\begin{theorem}
\DR can be solved in polynomial time.
\end{theorem}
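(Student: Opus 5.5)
The theorem follows by assembling the preceding results into an algorithm and checking that each stage runs in polynomial time.

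First, the algorithm performs the sanity checks. For each $a\in\Sigma$ it tests whether $G[V_a]$ is a clique or an independent set; by \Cref{color internal}, failure yields a no-instance. It computes $\mpp$ by counting $|E(V_a,V_b)|$ for every pair. For each $\{a,b\}\in\mpp$ it checks whether $w[a,b]$ has at least two $a$-runs or at least two $b$-runs; by \Cref{sanity}, failure yields a no-instance. These checks take $O(|\Sigma|^2(|V|^2+|w|))$ time. We may also assume without loss of generality that every letter occurs, since unused letters can be dropped, and that the letter counts in $w$ match the sizes $|V_a|$, which is checked directly.

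Next, the algorithm builds the 2-SAT formula $F$. It has at most $2|\Sigma|^2$ variables. The clauses are built item by item.
\begin{itemize}
\item The clauses of \Cref{clause one side} are written down directly.
\item The clauses of \Cref{clause not palindrome} need one polynomial-time call per pair, by \Cref{compute non palindrome word}.
\item The clauses of \Cref{clause palindrome} need at most $O(|\Sigma|^3)$ calls, each polynomial by \Cref{compute related word}.
\item The clauses of \Cref{clauses blocks} need, for each letter $a$, at most two candidate decoders. Each candidate is built from polynomially many calls to the above lemmas and verified on $I_a$ with \Cref{lem verify decoder}.
\end{itemize}
Every clause has at most two literals, so $F$ is a polynomial-size 2-SAT instance. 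It can be decided, and a satisfying assignment found, in linear time via the implication-graph/strongly-connected-components method.

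Finally, correctness follows from the preceding lemma: $I$ is a yes-instance iff $F$ is satisfiable. Its $(\Leftarrow)$ direction is constructive, giving the explicit decoder $\Dmc=A\cup B$ from a satisfying assignment $A$. The algorithm can therefore output a decoder, not only a decision. As a safeguard, it can verify the output once more with \Cref{lem verify decoder}.

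The only delicate point is bookkeeping rather than a mathematical obstacle. We must confirm that the sanity checks establish the hypotheses assumed by \Cref{equiv for valid decoder} and the formula construction, namely that every one-sided pair has two runs of some letter. This is exactly what the second sanity check guarantees.
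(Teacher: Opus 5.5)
Your proposal is correct and follows essentially the same route as the paper: the theorem comes from the polynomial-time sanity checks, the polynomial-time construction of the 2-SAT formula $F$ using the preceding lemmas (including decoder verification), and the lemma that $I$ is a yes-instance if and only if $F$ is satisfiable. Your extra running-time bookkeeping and the remark that a decoder $A\cup B$ can be output are harmless elaborations of the paper's one-line argument.
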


\subsection{Coloring Retrieval}
\newcommand{\CR}{\textsc{Coloring Retrieval}\xspace}
\newcommand{\GIP}{\textsc{Graph Isomorphism}\xspace}
The \CR problem is defined as follows. Given a graph $G=(V,E)$, a word $w \in \Sigma^*$ and a decoder $\Dmc \subseteq \Sigma^2$, decide whether there is a coloring $\chi \colon V \rightarrow \Sigma$ that is consistent with the inherent coloring of $G(\Dmc, w)$.
\begin{theorem}\label{gi hardness}
\CR is \GI-complete.
\end{theorem}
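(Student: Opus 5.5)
We have to show two reductions: \CR reduces to \GIP, and \GIP reduces to \CR.

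\textbf{Membership.} Given $(G,w,\Dmc)$, we build the letter graph $G(\Dmc,w)$ in polynomial time. A suitable coloring $\chi$ exists if and only if $G$ is isomorphic to $G(\Dmc,w)$: any isomorphism $f$ yields $\chi(v):=w_{f(v)}$, which is respected by construction, and conversely a respecting isomorphism is in particular an isomorphism. So \CR is exactly the question whether $G\cong G(\Dmc,w)$, which is a single \GIP query. Computing $\chi$ itself, rather than only deciding existence, needs the standard search-to-decision self-reducibility of \GI. We individualize vertices with attached gadgets, for instance long pendant paths of distinct lengths, to fix the images one at a time; this still uses only polynomially many \GI-oracle calls.

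\textbf{Hardness.} The plan is to reduce \GIP to \CR. \GI remains \GI-complete restricted to simple connected graphs, so we may take $G_1,G_2$ of that form with $n$ vertices each. We want to encode $G_2$ as a letter graph $G(\Dmc,w)$ that is computable in polynomial time, and then the \CR instance $(G_1,w,\Dmc)$ is consistent if and only if $G_1\cong G(\Dmc,w)\cong G_2$. The natural encoding is the trivial one: use an alphabet $\Sigma=\{1,\dots,n\}$ with one letter per vertex of $G_2$, the word $w:=12\cdots n$, and the decoder $\Dmc:=\{ij \mid i<j,\ \{i,j\}\in E(G_2)\}$. Then $G(\Dmc,w)=G_2$ exactly. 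The statement allows an arbitrary alphabet as part of the input, so a linear-size alphabet is legitimate, and the whole construction clearly takes polynomial time.

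\textbf{Expected obstacle.} The point needing care is the formal definition of \CR: whether a coloring must merely make some respecting isomorphism exist, and whether returning a coloring versus deciding existence changes the complexity class. I would state explicitly the equivalence ``consistent iff $G\cong G(\Dmc,w)$'' as a short lemma, since both directions of the reduction rest on it. Given that lemma, the two reductions above are routine. If the intended version demands a small alphabet, the hardness step would become the real obstacle, because bounded-letter graphs have bounded lettericity and \GI on them may well be easy; I would therefore argue that the problem as stated places no bound on $|\Sigma|$.
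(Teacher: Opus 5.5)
Your proof is correct and follows the paper's argument. Membership comes from observing that a yes-instance is exactly one with $G\cong G(\mathcal{D},w)$. Hardness comes from encoding the second graph with one letter per vertex so that $G(\mathcal{D},w)$ equals that graph. The paper uses the symmetric decoder $\{uv,vu \mid \{u,v\}\in F\}$ with an arbitrary permutation of the vertices as the word; your one-directional decoder with the identity word works equally well. Since \textsc{Coloring Retrieval} is defined as a decision problem, your search-to-decision remark and the restriction to connected graphs are unnecessary, though harmless.
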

\begin{proof}
The containment in \GI{} holds immediately: An instance~$(G,\Sigma,\Dmc,w)$ of~\CR is a yes-instance if and only if $G$ and $G(\Dmc, w)$ are isomorphic.
As~$G(\Dmc, w)$ can be computed in polynomial time, there is a trivial reduction from~\CR to \GIP.

The~\GI-hardness can be seen as follows.
Let~$I:=(G=(V,E),H=(U,F))$ be an instance of~\GIP.
We obtain an instance~$I':=(G',\Sigma,\Dmc,w)$ of~\CR as follows:
We set~$G':= G$, $\Sigma := U$, $\Dmc := \{uv,vu\mid \{u,v\}\in F\}$, and define~$w$ to be an arbitrary permutation of~$U$. 
Note that~$H = G(\Dmc,w)$.
This immediately implies that~$I$ is a yes-instance if and only if~$G$ is isomorphic to~$G(\Dmc,w)$.
Hence, \CR is also \GI-hard.\lncsqed
\end{proof}

\section{Symmetric Lettericity}
\newcommand{\nd}{\mathrm{nd}}
\newcommand{\sll}{\mathrm{sl}}
In this section, we consider the lettericity problem, when each allowed decoder needs to be~\emph{symmetric}, that is, $ab\in \Dmc$ if and only if~$ba\in \Dmc$.

Let~$G=(V,E)$ be a graph.
We let~$\sll(G)$ denote the smallest size of any alphabet~$\Sigma$, such that there is a symmetric decoder~$\Dmc\subseteq \Sigma^2$ and a word~$w\in \Sigma^{|V|}$ for which~$G$ is isomorphic to~$G(\Dmc,w)$.
Note that~$G(\Dmc,w)$ is isomorphic to~$G(\Dmc,w')$ for every permutation~$w'$ of~$w$.

\begin{observation} \label{rem:lettericityAndSymmetricLettericity}
For each graph $G$ we have $l(G) \leq sl(G)$.\end{observation}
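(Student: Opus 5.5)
This observation follows directly from the definitions. I will take an optimal witness for symmetric lettericity and observe that it is already an admissible witness for ordinary lettericity. Here $l(G)$ denotes the lettericity of $G$, and $sl(G)$ is the quantity $\sll(G)$ defined just above.

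First, I will check that $\sll(G)$ is well defined, i.e.\ that some symmetric witness exists at all. Let $n:=|V|$ and fix any enumeration $v_1,\dots,v_n$ of $V$. Use the alphabet $\Sigma:=V$, the word $w:=v_1\cdots v_n$, and the decoder $\Dmc:=\{uv,vu\mid \{u,v\}\in E\}$. This is exactly the construction used in the proof of \Cref{gi hardness}. The decoder is symmetric by construction. Every letter occurs exactly once in $w$, so for $i<j$ we have $w_iw_j\in\Dmc$ if and only if $\{v_i,v_j\}\in E$. Hence $G(\Dmc,w)\cong G$, and therefore $\sll(G)\le |V|$ is finite.

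Second, let $\Sigma$ be an alphabet with $|\Sigma|=\sll(G)$, together with a symmetric decoder $\Dmc\subseteq\Sigma^2$ and a word $w\in\Sigma^{|V|}$ such that $G\cong G(\Dmc,w)$. A symmetric decoder is in particular a decoder, i.e.\ a subset of $\Sigma^2$. So the triple $(\Sigma,\Dmc,w)$ is feasible in the minimization that defines the lettericity. Since the lettericity is the minimum over a superset of the feasible configurations for $\sll(G)$, we get $l(G)\le|\Sigma|=\sll(G)$.

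I do not expect any real obstacle. The only point needing care is the existence of some symmetric witness, so that the minimum defining $\sll(G)$ is taken over a nonempty set; the first step handles this.
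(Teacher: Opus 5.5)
Your proof is correct and is the same reasoning the paper leaves implicit by stating this as an observation without proof: a symmetric decoder is in particular a decoder, so an optimal symmetric witness $(\Sigma,\mathcal{D},w)$ is also a lettericity witness, giving $l(G)\le \mathrm{sl}(G)$. Your preliminary check that $\mathrm{sl}(G)$ is finite, using the alphabet $V$, is valid but not needed for the inequality; finiteness also follows from the paper's later result $\mathrm{sl}(G)=\mathrm{nd}(G)$.
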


We show that~$\sll(G)$ equals the neighborhood diversity of~$G$, a parameter that can be computed in linear time.
Here, the neighborhood diversity~$\nd(G)$ of~$G$ denotes the smallest partition~$(V_1,\dots,V_p)$ of~$V$, such that for each~$i\in [1,p]$ and for any two vertices~$u,v\in V_i$, $u$ and~$v$ are generalized twins, that is $N(u)\setminus \{v\} = N(v)\setminus \{u\}$.
Note that this implies that for each~$i\in [1,p]$, $V_i$ is a clique or an independent set.  

We start by showing that two vertices that are not generalized twins cannot receive the same color under.

\begin{lemma}
\label{sl geq nd}
Let~$G=(V,E)$ be a graph and let~$u,v\in V$ with~$N(u)\setminus \{v\} \neq N(v)\setminus \{u\}$.
Then for every symmetric decoder~$\Dmc$ and each word~$w$ for which there is a graph isomorphism~$f\colon V \to V(G(\Dmc,w))$, $w_{f(u)} \neq w_{f(v)}$.
\end{lemma}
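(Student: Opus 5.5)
We argue by contraposition: assuming $w_{f(u)} = w_{f(v)}$, we show that $u$ and $v$ must be generalized twins. Fix a symmetric decoder~$\Dmc$, a word~$w$, and a graph isomorphism $f\colon V\to V(G(\Dmc,w))$. Suppose $w_{f(u)} = w_{f(v)} =: a$. The key fact is that for a symmetric decoder, adjacency of two distinct positions $i \neq j$ in $G(\Dmc,w)$ does not depend on their order. Indeed, if $i<j$ the edge exists iff $w_iw_j\in\Dmc$, and if $j<i$ it exists iff $w_jw_i\in\Dmc$. By symmetry of~$\Dmc$ these two conditions coincide, so $ij$ is an edge iff $w_iw_j\in\Dmc$, regardless of order.

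Now take any vertex $x\in V\setminus\{u,v\}$ and let $b := w_{f(x)}$. Since $f$ is a bijection, $f(x)$ is distinct from both $f(u)$ and $f(v)$. By the key fact, $\{u,x\}\in E$ iff $ab\in\Dmc$, and likewise $\{v,x\}\in E$ iff $ab\in\Dmc$. Hence $x\in N(u)$ iff $x\in N(v)$. This holds for every $x\neq u,v$, so $N(u)\setminus\{v\} = N(v)\setminus\{u\}$; here we use that $u\notin N(u)$ and $v\notin N(v)$ because the graph is loopless. This contradicts the hypothesis of the lemma, and therefore $w_{f(u)}\neq w_{f(v)}$.

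No step is a real obstacle. The only point needing care is the order-independence of adjacency: it is exactly where symmetry of~$\Dmc$ enters, and it is what fails for general decoders, since there the relative positions of $f(x)$, $f(u)$, and $f(v)$ in~$w$ matter.
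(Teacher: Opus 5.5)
Your proof is correct and is essentially the paper's argument. The paper picks a single vertex $x$ adjacent to exactly one of $u,v$ and uses the symmetry of $\Dmc$ to transfer the edge $f(u)f(x)$ to $f(v)f(x)$. That is your order-independence observation applied to this one distinguishing vertex, phrased as a contradiction rather than as a check over all $x \in V\setminus\{u,v\}$.
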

\newcommand{\proofSlGeqNd}{

\begin{proof}

Assume towards a contradiction that~$w_{f(u)} = w_{f(v)}$.
As~$N(u)\setminus \{v\} \neq N(v)\setminus \{u\}$, there is a vertex~$x\in V \setminus \{u,v\}$ that has exactly one neighbor in~$\{u,v\}$.
Assume without loss of generality that~$x$ is adjacent to~$u$ and non-adjacent to~$v$.
As~$f$ is a graph isomorphism between~$G$ and~$G(\Dmc,w)$, $\Dmc$ contains at least one of~$\{w_{f(u)}w_{f(x)}, w_{f(x)}w_{f(u)}\}$.
In fact, since~$\Dmc$ is a symmetric decoder, $\Dmc$ contains both these word.
Hence, by~$w_{f(u)} = w_{f(v)}$, we get that~$f(v)$ is adjacent to~$f(x)$ in~$G(\Dmc,w)$.
This contradicts the fact that~$f$ is a graph isomorphism between~$G$ and~$G(\Dmc,w)$, as~$v$ and~$x$ are non-adjacent in~$G$.
Thus, $w_{f(u)} \neq w_{f(v)}$.
\lncsqed
\end{proof}
}

\proofSlGeqNd

Note that this immediately implies that for each graph~$G$, $\sll(G) \geq \nd(G)$.
We now show that the inequality also holds in the other direction.

\begin{theorem}
\label{sl eq nd}
For every graph~$G$, $\sll(G) = \nd(G)$.
\end{theorem}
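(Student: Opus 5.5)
The inequality $\sll(G) \geq \nd(G)$ already follows from \Cref{sl geq nd}. Take any symmetric decoder $\Dmc$ and word $w$ with an isomorphism $f\colon V\to V(G(\Dmc,w))$. Each color class $\{v \mid w_{f(v)} = c\}$ then consists of pairwise generalized twins. Hence the color classes form a partition of the kind required in the definition of neighborhood diversity, and the number of letters used is at least $\nd(G)$. It therefore remains to construct, for a minimum partition $(V_1,\dots,V_p)$ into classes of pairwise generalized twins, a symmetric decoder over an alphabet of size $p$ that realizes $G$.

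For the construction, let $\Sigma := \{a_1,\dots,a_p\}$ and color every vertex of $V_i$ with $a_i$. Let $w$ be any word listing the vertices in any order, replaced by their colors; the order is irrelevant since the decoder will be symmetric. Fix an isomorphism candidate $f$ sending the vertices, in the chosen order, to $1,\dots,|V|$. Define $\Dmc$ by two rules. First, put $a_ia_i \in \Dmc$ if and only if $V_i$ is a clique; we already noted that each $V_i$ is a clique or an independent set, and for $|V_i|=1$ the choice is arbitrary. Second, for $i\neq j$, put both $a_ia_j$ and $a_ja_i$ into $\Dmc$ if and only if $E(V_i,V_j) \neq \emptyset$. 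This $\Dmc$ is symmetric by construction.

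The main step is to show that between two distinct classes $V_i$ and $V_j$ the edge set is either empty or complete. Suppose $u\in V_i$ and $x\in V_j$ are adjacent, and let $u'\in V_i$ and $x'\in V_j$ be arbitrary. Since $u$ and $u'$ are generalized twins and $x\notin\{u,u'\}$, the vertex $x$ is adjacent to $u'$. Likewise, $x$ and $x'$ are generalized twins and $u'\notin\{x,x'\}$, so $u'$ is adjacent to $x'$. Hence the edges between the two classes are complete.

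With this, checking that $f$ is an isomorphism is immediate. Take any two vertices $u,v$ and let $f(u)<f(v)$ without loss of generality. If they lie in the same class $V_i$, they are adjacent exactly when $V_i$ is a clique, which holds exactly when $a_ia_i\in\Dmc$. If they lie in different classes $V_i$ and $V_j$, they are adjacent exactly when $E(V_i,V_j)\neq\emptyset$, which holds exactly when $a_ia_j\in\Dmc$. Therefore $\sll(G)\leq p=\nd(G)$. I do not expect a serious obstacle here; the only care needed is the homogeneity argument between classes and the degenerate singleton classes.
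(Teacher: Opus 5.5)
Your proof is correct and follows the paper's argument. The lower bound comes from \Cref{sl geq nd} applied to the color classes, and the upper bound assigns one letter per twin class with a symmetric decoder encoding clique/independent classes and complete/empty pairs between classes. The only differences are cosmetic: you allow an arbitrary vertex order instead of the block word $1^{|V_1|}\cdots p^{|V_p|}$, and you explicitly prove the all-or-nothing edge structure between classes, which the paper merely asserts.
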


\newcommand{\proofSlEqNd}{

\begin{proof}

Due to~\Cref{sl geq nd}, it remains to show that~$\sll(G) \leq \nd(G)$.
Let~$p:= \nd(G)$ and let~$(V_1,\dots,V_p)$ be a partition of~$V$, such that for each~$i\in [1,p]$ and for any two vertices~$u,v\in V_i$, $N(u)\setminus \{v\} = N(v)\setminus \{u\}$.
Recall that this implies that for each~$i\in [1,p]$, $G(V_i)$ is a clique or an independent set.
Moreover, it implies that for all~$1\leq i < j \leq p$, $|E(V_i,V_j)| \in \{0, |V_i|\cdot|V_j|\}$.
That is, either each vertex of~$V_i$ is adjacent to each vertex of~$V_j$, or there are no edges between vertices of~$V_i$ and vertices of~$V_j$.

To show that~$\sll(G) \leq \nd(G)=p$, we define an alphabet~$\Sigma$ of size~$p$, a symmetric decoder~$\Dmc\subseteq \Sigma^2$, and a word~$w\in \Sigma^{|V|}$ such that~$G(\Dmc,w)$ is isomorphic to~$G$.
We set~$\Sigma := [1,p]$ and~$w:= 1^{|V_1|}\cdot \ldots \cdot p^{|V_p|}$.
Next, we define the symmetric decoder~$\Dmc$.
We initialize~$\Dmc$ as~$\{ii\mid i\in [1,p], G[V_i]~\text{is a clique}\}$.
For each~$1\leq i< j \leq p$, we add the words~$ij$ and~$ji$ to the decoder if and only if~$|E(V_i,V_j)| = |V_i|\cdot|V_j|$.
This completes the definition of~$\Dmc$.
We show that~$G$ is isomorphic to~$G(\Dmc,w)$.
To this end, let~$f\colon V \to V(G(\Dmc,w))$ be an arbitrary bijection fulfilling for each~$v\in V$ and~$i\in [1,p]$, $w_{f(v)} = i$ if and only if~$v\in V_i$.
Since~$w$ contains exactly $|V_i|$~many times the letter~$i$, such a function~$f$ exists.
We show that~$f$ is a graph isomorphism between~$G$ and~$G(\Dmc,w)$.

Let~$u$ and~$v$ be distinct vertices of~$V$. 
We show that~$\{u,v\}\in E$ if and only if~$\{f(u),f(v)\}$ is an edge of~$G(\Dmc,w)$.
Let~$i,j\in [1,p]$, such that~$u\in V_i$ and~$v\in V_j$.

\textbf{Case: $i=j$.} 
Hence, $\{u,v\}\in E$ if and only if~$V_i=V_j$ is a clique in~$G$.
This is the case if and only if~$ii\in \Dmc$, which is the case if and only if~$\{f(u),f(v)\}$ is an edge of~$G(\Dmc,w)$, as~$w_{f(u)} = w_{f(v)} = i$.

\textbf{Case: $i\neq j$.} 
Hence, $\{u,v\}\in E$ if and only if~$|E(V_i,V_j)| = |V_i|\cdot|V_j|$.
This is the case if and only if~$\{ij,ji\}\subseteq \Dmc$, which is the case if and only if~$\{f(u),f(v)\}$ is an edge of~$G(\Dmc,w)$, as~$w_{f(u)} = i$ and~$w_{f(v)} = j$.

Thus, $G$ is isomorphic to~$G(\Dmc,w)$, which implies that~$\sll(G) \leq p$ by the fact that~$\Dmc$ is a symmetric decoder and~$\Sigma$ has size~$p$. 
\lncsqed
\end{proof}
}

\proofSlEqNd

\section{Conclusion}
We now briefly discuss another direction to approach the complexity of the lettericity problem.
Namely, in contrast to asking whether we can derive/retrieve one of the three objects for a solution of the lettericity problem, we can ask for a single given such object, whether we can derive the other two objects.
We call the respective problems~\emph{extension problems}.
For example, \DE would be the problem, where we are given a graph~$G$, an alphabet~$\Sigma$, and a decoder~$\Dmc\subseteq \Sigma^2$, and we ask whether there is a word~$w$ and a coloring~$\chi$, such that~$G(\Dmc,w) = G$, while respecting the coloring~$\chi$.

For example, \DE can easily be observed to be \GI-hard.
This hardness follows from the reduction behind~\Cref{gi hardness}, as the chosen word~$w$ for the \CR-instance was chosen arbitrarily among all possible permutations of the alphabet.
Moreover, we can ensure that no two vertices in the \GIP-instance are generalized twins, since subdividing each edge in both~$G$ and~$H$ ensures this property (assuming there are no connected components of size at most~$2$) and thus yields an equivalent instance.

For future work, it would be interesting to analyze the other two extension problems, namely~\textsc{Word Extension}, where we try to find a decoder and a coloring for a given word, and~\textsc{Coloring Extension}, where we try to find a decoder and a word for a given coloring.
Our presented algorithms for the retrieval problems do not seem to be adaptable to work for these extension problems, as the algorithms highly relied on having two of the three objects being known.

Other interesting directions for future work could be to look into generalizations of the lettericity problem. For example, what if we allow each vertex to receive an arbitrary set of color instead of just a single one. That is, $w$ would not be a word of~$\Sigma^n$ anymore but rather a sequence of~$n$ subsets of~$\Sigma$.
For such a generalization, multiple possible rules for when an edge is added to the letter graph can be considered. 
For example, let~$i$ and~$j$ be positions in~$w$ with~$i<j$ and let~$C_i$ and~$C_j$ be the set of colors of these two positions respectively.
One way to define the edges of the letter graph could be to require that for each~$\alpha\in C_i$ and each~$\beta\in C_j$, $\alpha\beta$ is a word of the decoder~$\Dmc$.
Other edge definition rules could require that (i)~there is at least one~$\alpha\in C_i$ and at least one~$\beta\in C_j$, such that~$\alpha\beta\in \Dmc$ or (ii)~the number of pairs~$(\alpha,\beta)\in C_i\times C_j$ with~$\alpha\beta\in \Dmc$ is at least a given threshold~$t$ (for example~$t=\frac{|C_i\times C_j|}{2}$).
Such a generalization of letter graphs could allow for an even more compact representation of graphs.
Note that for the normal definition of~$w$, that is, where each vertex as only a single color, all three example definitions coincide.

\bibliographystyle{splncs04}
\bibliography{literature}

\end{document}